\newcommand{\openone}{\leavevmode\hbox{\small1\normalsize\kern-.33em1}}
\def\UrlSpecials{\do\~{\kern -.15em\lower .7ex\hbox{~}\kern .04em}} \catcode`~=13 
\newcommand{\nn}{\nonumber}
\newcommand{\calA}{\mathcal{A}}
\newcommand{\calB}{\mathcal{B}}
\newcommand{\calG}{\mathcal{G}}
\newcommand{\calI}{\mathcal{I}}
\newcommand{\calO}{\mathcal{O}}
\newcommand{\calP}{\mathcal{P}}
\newcommand{\calR}{\mathcal{R}}
\newcommand{\calS}{\mathcal{S}}
\newcommand{\calV}{\mathcal{V}}
\newcommand{\calZ}{\mathcal{Z}}
\newcommand{\bv}{\mathbf{v}}
\newcommand{\rmH}{\mathrm{H}}
\newcommand{\bbE}{\mathbb{E}}
\newcommand{\bbN}{\mathbb{N}}
\newcommand{\bbR}{\mathbb{R}}
\DeclareMathAlphabet{\mathbsf}{OT1}{cmss}{bx}{n}
\DeclareMathAlphabet{\mathssf}{OT1}{cmss}{m}{sl}
\DeclareSymbolFont{bsfletters}{OT1}{cmss}{bx}{n}  
\DeclareSymbolFont{ssfletters}{OT1}{cmss}{m}{n}
\DeclareMathSymbol{\bsfGamma}{0}{bsfletters}{'000}
\DeclareMathSymbol{\ssfGamma}{0}{ssfletters}{'000}
\DeclareMathSymbol{\bsfDelta}{0}{bsfletters}{'001}
\DeclareMathSymbol{\ssfDelta}{0}{ssfletters}{'001}
\DeclareMathSymbol{\bsfTheta}{0}{bsfletters}{'002}
\DeclareMathSymbol{\ssfTheta}{0}{ssfletters}{'002}
\DeclareMathSymbol{\bsfLambda}{0}{bsfletters}{'003}
\DeclareMathSymbol{\ssfLambda}{0}{ssfletters}{'003}
\DeclareMathSymbol{\bsfXi}{0}{bsfletters}{'004}
\DeclareMathSymbol{\ssfXi}{0}{ssfletters}{'004}
\DeclareMathSymbol{\bsfPi}{0}{bsfletters}{'005}
\DeclareMathSymbol{\ssfPi}{0}{ssfletters}{'005}
\DeclareMathSymbol{\bsfSigma}{0}{bsfletters}{'006}
\DeclareMathSymbol{\ssfSigma}{0}{ssfletters}{'006}
\DeclareMathSymbol{\bsfUpsilon}{0}{bsfletters}{'007}
\DeclareMathSymbol{\ssfUpsilon}{0}{ssfletters}{'007}
\DeclareMathSymbol{\bsfPhi}{0}{bsfletters}{'010}
\DeclareMathSymbol{\ssfPhi}{0}{ssfletters}{'010}
\DeclareMathSymbol{\bsfPsi}{0}{bsfletters}{'011}
\DeclareMathSymbol{\ssfPsi}{0}{ssfletters}{'011}
\DeclareMathSymbol{\bsfOmega}{0}{bsfletters}{'012}
\DeclareMathSymbol{\ssfOmega}{0}{ssfletters}{'012}
\newcommand{\tilP}{\tilde{P}}
\newcommand{\tilQ}{\tilde{Q}}
\newcommand{\tilx}{\tilde{x}}
\DeclareMathOperator*{\argmax}{arg\,max}
\DeclareMathOperator*{\argmin}{arg\,min}
\newcommand{\Unif}{\mathrm{Unif}}
\newtheorem{theorem}{Theorem} 
\newtheorem{lemma}[theorem]{Lemma}
\newtheorem{proposition}[theorem]{Proposition}
\newtheorem{definition}{Definition} 
\newtheorem{example}{Example}
\newcommand*\circled[1]{\tikz[baseline=(char.base)]{
            \node[shape=circle,draw,inner sep=0.3pt] (char) {#1};}}
\newcommand{\TV}{\mathsf{TV}}
\mathchardef\mhyphen="2D
\newcommand{\key}{\mathtt{key}}
\newcommand{\pt}{\mathrm{pt}}
\newcommand{\cP}{\mathcal{P}}
\newcommand{\cS}{\mathcal{S}}
\newcommand{\cV}{\mathcal{V}}
\newcommand{\cX}{\mathcal{X}}
\newcommand{\sD}{\mathsf{D}}
\title{Theoretically Grounded Framework for LLM Watermarking: A Distribution-Adaptive Approach}
\author{%
  Haiyun He\thanks{These authors contributed equally to this work.} \\
  HKUST (GZ)\\
  Guangzhou, China \\
  \texttt{haiyunhe@hkust-gz.edu.cn} \\
  \And
  Yepeng Liu$^*$ \\
  UC Santa Barbara \\
  Santa Barbara, CA, USA \\
  \texttt{yepengliu@ucsb.edu} \\
  \AND
  Ziqiao Wang \\
  Tongji University \\
  Shanghai, China \\
  \texttt{ziqiaowang@tongji.edu.cn} \\
  \And
  Yongyi Mao \\
  University of Ottawa \\
  Ottawa, ON, Canada \\
  \texttt{ymao@uottawa.ca} \\
  \And
  Yuheng Bu \\
  UC Santa Barbara \\
  Santa Barbara, CA, USA \\
  \texttt{buyuheng@ucsb.edu} \\
}
\begin{document}

\maketitle
\begin{abstract}
Watermarking has emerged as a crucial method to distinguish AI-generated text from human-created text. Current watermarking approaches often lack formal optimality guarantees or address the scheme and detector design separately. In this paper, we introduce a novel, unified theoretical framework for watermarking Large Language Models (LLMs) that jointly optimizes both the watermarking scheme and detector. Our approach aims to maximize detection performance while maintaining control over the worst-case false positive rate (FPR) and distortion on text quality. 
We derive closed-form optimal solutions for this joint design and characterize the fundamental trade-off between watermark detectability and distortion. Notably, we reveal that the optimal watermarking schemes should be adaptive to the LLM’s generative distribution.
Building on our theoretical insights, we propose a distortion-free, distribution-adaptive watermarking algorithm (DAWA) that leverages a surrogate model for model-agnosticism and efficiency. Experiments on Llama2-13B and Mistral-8$\times$7B models confirm the effectiveness of our approach, particularly at ultra-low FPRs. Our code is available at \url{https://github.com/yepengliu/DAWA}.


\end{abstract}

\section{Introduction}\label{Sec:intro}



Arising with Large Language Models (LLMs) \citep{touvron2023llama} is a double-edged sword: while they boost productivity, they also introduce new risks, including plagiarism, challenges to content accountability, and other forms of misuse.
Watermarking, a hidden and machine-verifiable tag inserted into LLMs' outputs, has therefore become a critical line of defense for publishers, educators, and regulators.


Existing watermarking techniques for AI-generated text are commonly grouped into two main categories: post-process and in-process \citep{liu2024survey, wu2025survey}. Post-process watermarking is applied after the text is generated \citep{brassil1995electronic, yoo2023robust, yang2023watermarking, munyer2023deeptextmark, yang2022tracing, sato2023embarrassingly, zhang2024remark, abdelnabi2021adversarial,martinian2005authentication, moulin2000information, chen2000design, 1564436, 4418489}, while in-process watermarking embeds watermarks  during generation \citep{wang2023towards, fairoze2023publicly, hu2023unbiased, huo2024token, zhang2023watermarks, tu2023waterbench, ren2023robust, hou2024semstamp,kirchenbauer2023watermark, zhao2023provable, liu2024adaptive, liu2024a, qu2024provably, fernandez2023three, fu2024gumbelsoft,wu2023dipmark,zhao2024permute, boroujeny2024multi, christ2024undetectable,giboulot2024watermax}.
There is a growing interest in in-process methods due to their invisibility, flexibility, and seamless integration with negligible latency. For additional related works, please refer to Appendix~\ref{App: other literature}.

However, realizing the full potential of in-process watermarking requires careful design. A key challenge is controlling the false positive rate (FPR), as even a single error can lead to serious consequences, such as wrongly accusing a human author. Maintaining a high true positive rate (TPR) while keeping an ultra-low FPR, e.g., $1\mathrm{e}{-05}$, is therefore essential. In addition, effective in-process watermarking should be both \textbf{detectable} and \textbf{distortion-free}: the watermark must be reliably identified under strict false positive rate (FPR) constraints, while preserving the quality and distribution of the original output \citep{christ2024undetectable, kuditipudi2023robust}.  
Moreover, a practical detector should be \textbf{model-agnostic}, operable without access to the original LLM or its prompt \citep{kirchenbauer2023watermark}.  Balancing all these goals is challenging, and existing methods often fall short in one or more dimensions.

Many heuristic designs typically embed watermarks into generated tokens by perturbing the token logits (e.g., the green-red list \citep{kirchenbauer2023watermark}) or modifying the sampling process (e.g.,  Gumbel-Max sampling \citep{gumbel2023}). Detection is usually performed using handcrafted score statistics.  However, these ``trial-and-error'' approaches rely heavily on empirical tuning, with no formal optimality guarantees. 

In principle, designing a watermarking system can be formulated as a constrained optimization problem:  maximizing the detection probability TPR with the FPR and text distortion under control. 
Recent theoretical efforts have taken steps towards this goal. For instance, \citet{takezawa2023necessary}, \citet{wouters2024optimizing}, and \citet{cai2024towards} focus on optimizing the logit perturbation strategy for the green-red list watermarking scheme. \citet{huang2023towards} frame watermarking as a statistical independence test between text and watermark and derive the optimal scheme for a \emph{fixed detector}, but stop short of a practical, model-agnostic algorithm. On the other hand, \citet{li2024statistical} optimizes the detector for a \emph{fixed watermarking scheme} using i.i.d.~pivotal statistics. 
Consequently, these approaches do not guarantee overall system optimality. A significant gap in these theoretical explorations is that none of them consider the \emph{joint optimization} of both the watermarking scheme and the detector.


Specifically, 
both the way the watermark is embedded and the type of signal used can be optimized. However, existing approaches often rely on fixed, simplistic designs, such as using randomly generated bits or samples from uniform distributions,
leaving much of the design space unexplored. These restrictive choices may prevent existing schemes from achieving optimal performance.

\begin{wrapfigure}[]{r}{.6\linewidth}
    \vspace{-2em}
    \begin{center}
    \includegraphics[width=1\linewidth]{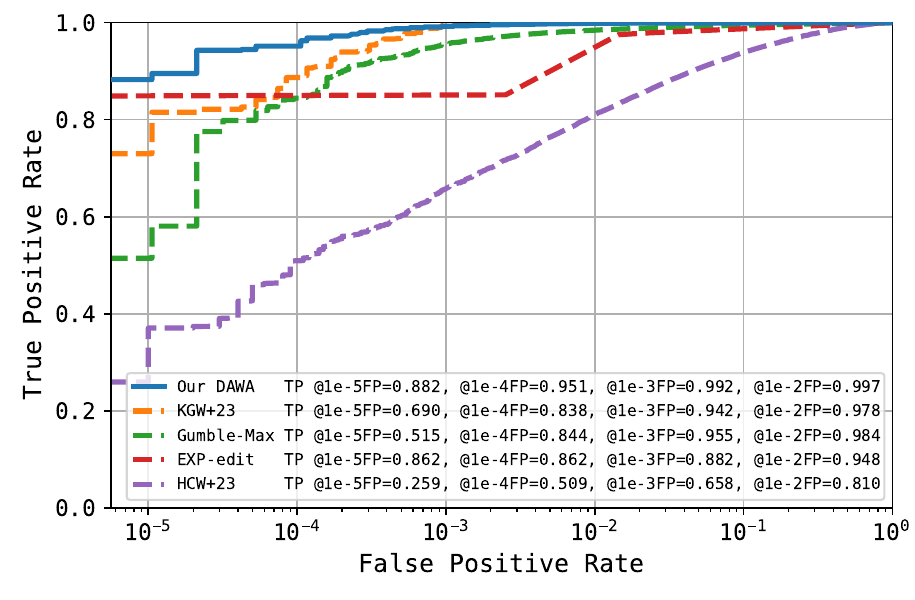}
    \vspace{-2em}
    \caption{Comparison of TPR at ultra-low FPR among different watermarking methods.}
    \label{fig:low_fpr}
    \vspace{-2em}
    \end{center}
\end{wrapfigure}

This paper aims to fill the gap. We develop a \textbf{novel, unified theoretical framework} that subsumes most existing in-process schemes, aiming to \emph{jointly optimize} the watermarking-detector pair for any token-sequence length $T$ that achieves the \emph{best trade-off} between watermark detectability and text distortion. Unlike the classical watermarking paradigm, which employs fixed watermark distributions, our framework generalizes watermarking to an \emph{adaptive setting} where the watermark signal exploits the LLM's generative distribution. This opens up one more degree of freedom to optimize the sampling distributions of watermark signals, thereby enhancing detection reliability at ultra-low FPRs, as shown in Figure \ref{fig:low_fpr}.

Our contributions can be summarized as follows:
\begin{itemize}
[leftmargin=*,topsep=0em,itemsep=0.1em]
    \item 
    In Section \ref{Sec: formulation}, we propose a unified theoretical framework for LLM watermarking and detection that encompasses most existing watermarking methods. This framework features a common randomness shared between watermark generation and detection to perform an independence test.
    
    \item In Section \ref{Sec: joint optimize}, we characterize the universally minimum Type-II error (i.e., ~$1-$TPR) as a function of FPR and text distortion level, revealing a fundamental trade-off between \textbf{detectability and distortion}.  
    More importantly,  we identify the closed-form jointly optimal solutions for watermarking schemes and detectors, providing 
    a guideline for practical design,
    i.e., watermarking schemes should adapt to the 
    generative distribution of LLMs.
    \item 
    In Section \ref{section: token-level}, we translate the sequence-level optimum to a practical token-level watermarking design. We prove that it retains reliable detection performance and is inherently robust against token replacement attacks. 
    In Section \ref{Sec: algorithm}, we introduce  \textbf{DAWA} (\textbf{D}istribution-\textbf{A}daptive \textbf{W}atermarking \textbf{A}lgorithm), a distortion-free implementation of the token-level design,  which leverages a surrogate language model and the Gumbel-Max sampling trick to achieve \textbf{model-agnosticism} and computational efficiency.
    
    \item In Section \ref{Sec: experiments}, we conduct extensive experiments on Llama2-13B \citep{touvron2023llama} and Mistral-8$\times$7B \citep{jiang2023mistral}, across multiple datasets. DAWA consistently outperforms the compared methods, even under token replacement attacks, and maintains high text quality. As shown in Figure \ref{fig:low_fpr}, DAWA achieves superior detection capabilities at ultra-low FPRs.
    \item Lastly, we sketch in Appendix \ref{App: ext to robust} how to extend our theoretical framework to semantic‑invariant watermarking removal attacks and derive the associated detectability--distortion--robustness trade-off, guiding future semantic‑based watermark designs.
\end{itemize}



\section{Preliminaries and Problem Formulation}\label{Sec: formulation}

\textbf{Notations.} 
For a sequence of random variables $X_1, \ldots, X_n$, and any $i, j \in [n]$ with  $i\le j$, we denote $X_i^j\coloneqq (X_i,\ldots,X_j)$. 
We may use distortion function, namely, a function $\sD: \calP(\cX) \times \calP(\cX) \to [0,+\infty)$
to measure the dissimilarity between two distributions in $\calP(\cX)$. 
For example, 
the total variation distance, as a distortion, between $\mu,\nu\in\cP(\cX)$ is $\sD_\TV(\mu,\nu)\coloneqq\int \frac{1}{2}| \frac{d  \mu}{ d  \nu}-1| \,  d \nu$. 
For any set $A\subseteq \cX$, we use $\delta_A$ to denote its indicator function, namely, $\delta_A(x)\coloneqq \mathbbm{1}\{x\in A\}$. Additionally, we denote $(x)_+:=\max\{x, 0\}$ and $x\wedge y:= \min\{x, y\}$.


\textbf{Tokenization and NTP.}
LLMs process text through ``tokenization,'' namely, breaking it down into words or word fragments called ``tokens.'' An LLM generates text token by token.
Let $\calV$ denote the token vocabulary, typically of size $|\calV|=\calO(10^4)$ \citep{liu2019roberta,radford2019language,zhang2022opt,touvron2023llama}.
An \emph{unwatermarked} LLM generates the next token $X_t$ based on a prompt $\pt$ and the previous tokens $x_1^{t-1}$
by sampling the Next-Token Prediction (NTP) distribution $Q_{X_t|x_1^{t-1},\pt}$. For simplicity, the prompt dependency is suppressed in notation throughout the paper. The joint distribution of a length-$T$ generated token sequence $X_1^T$ is then given by $Q_{X_1^{T}}\coloneqq\prod_{t=1}^T Q_{X_t|X_1^{t-1}}$, which we assume to be identical to one that governs the human-generated text.

\begin{wrapfigure}[]{r}{.6\linewidth}
\vspace{-1.5em}
\centering
\begin{tikzpicture}[scale=0.73,every node/.style={transform shape}]



    \node (start) at (-1, 4.5) {};

    \tikzset{
        mybox/.style={
            rectangle, 
            draw, 
            fill=white, 
            minimum height=1cm, 
            minimum width=2cm, 
            align=center,
            thick,
            drop shadow
        },
        myboxtall/.style={
            rectangle, 
            draw, 
            fill=white, 
            minimum height=1.5cm, 
            minimum width=2cm, 
            align=center,
            thick,
            drop shadow
        },
        cloudnode/.style={cloud, draw, fill=white, cloud puffs=10, cloud puff arc=120, aspect=4, inner ysep=1em, minimum width=1cm, minimum height=.5cm},
        roundnode/.style={ellipse, draw, fill=white, minimum width=2cm, minimum height=1cm}
    }

        \node[myboxtall] (sampler1) at (9, 4.5) {Sampler};
        \node[mybox] (WM) at (5, 4) {Watermarking \\ Scheme};
        \node[myboxtall, below = 2 of WM] (dec) {Detector};
        \node[] (common) at (5,2.5) {Common Randomness $\zeta_1^T$};

        \node[gray] at (0.5,2.8) {Watermark Generation};
        \draw[-,dashed, thick, gray] (-1.2,2.5)--++(3.5,0);
        \node[gray] at (0.5,2.3) {Watermark Detection};

        \node[myboxtall] (LLM) at (0.5,4.5) {\Large \faDesktop \\ LLM};
        \draw[->, dashed, very thick] (LLM.east)++(0, 0.5) -- (8, 5);
        \node[] at (7,5.3) {$Q_{X_1^T}$};
        \draw[->, dashed, very thick] (sampler1.south)++(0.5,0) -- ++(0, -3.5) -- ++(-3.5, 0) node[midway, above] {$x_1^T$} node[midway, below] {unwatermarked text};

        \draw[->, very thick, color = SteelBlue2] (LLM.east)++(0, -0.5) -- (WM.west) node[near end, above] {$Q_{X_1^T}$};
        \draw[->, very thick, color = SteelBlue2] (WM.east) -- (8,4) node[midway, above] {$P_{X_1^T|\zeta_1^T}$};
        \draw[->, very thick, color = SteelBlue2] (sampler1.south)++(-.5,0) -- ++(0,-2.5) -- ++(-2.5,0) node[midway, above] {$x_1^T$} node[midway, below] {watermarked text};

        \draw[->, densely dotted, very thick, color = gray] (common.north) -- (WM.south);
        \draw[->, densely dotted, very thick, color=gray] (common.south) -- (dec.north);
        
        \draw[->, very thick, color = SteelBlue2] (dec.west)++(0, 0.5) -- ++(-2.5,0) node[midway, above] {1};
        \draw[->, dashed, very thick] (dec.west)++(0, -0.5) -- ++(-2.5,0) node[midway, above] {0};
        \node at (0.3,1.2) {Watermarked};
        \node at (0.3,.2) {Unwatermarked};

\end{tikzpicture}
    \caption{Overview of LLM watermarking and detection.}
    \label{Fig: system}
    \vspace{-1em}
\end{wrapfigure}
\textbf{A Framework for Watermarking Scheme.} 
Traditional post-hoc detectors identify AI-generated text by dividing the entire text space into rejection and acceptance regions, which relies on the assumption that certain sentences are unlikely to be written by humans. In contrast, modern LLM watermarking schemes achieve the same goal by analyzing the dependence structure between text $X_1^T$ and an auxiliary random sequence $\zeta_1^T$, thereby avoiding this restriction. 


In this paper, we propose a general framework for LLM watermarking and detection, as shown in Figure \ref{Fig: system}, which encompasses most of the existing watermarking schemes. The watermarking scheme and detector share a common randomness represented by an \emph{auxiliary random sequence} $\zeta_1^T$ drawn from a space $\calZ^T$ (either discrete or continuous). After passing through a watermarking scheme, the watermarked LLM samples token sequence according to the modified NTP distribution $P_{X_t|x_1^{t-1},\zeta_1^T}$, where $P_{X_1^T|\zeta_1^T}=\prod_{t=1}^T P_{X_t|X_1^{t-1},\zeta_1^T}$. This process associates the generated text $X_1^T$ with an auxiliary sequence $\zeta_1^T$. Thus, the joint distribution of the watermarked token sequence $X_1^T$ is $P_{X_1^T}$, which might be different from the original $Q_{X_1^T}$. 
The detector can then distinguish whether the received sequence $X_1^T$ is watermarked or not based on the common randomness.




To evaluate the \emph{distortion level} of a watermarking scheme, we measure the statistical divergence between the watermarked text distribution $P_{X_1^T}$ and the original one $Q_{X_1^T}$.
\begin{definition}[$\epsilon$-distorted watermarking scheme]
    A watermarking scheme is \emph{$\epsilon$-distorted} with respect to distortion $\sD$, if $\sD(P_{X_1^T}, Q_{X_1^T})\leq \epsilon$. Here, $\sD$ can be any distortion metric.
\end{definition}
 Common examples of such divergences include squared distance, total variation, KL divergence, and Wasserstein distance. For $\epsilon=0$, the watermarking scheme is \emph{distortion-free}.

Specifically, our formulation allows the auxiliary random sequence $\zeta_1^T$ to take an arbitrary structure, which contrasts the rather restricted i.i.d.\ assumption considered in \citet[Working Hypothesis 2.1]{li2024statistical}. In practice, $\zeta_1^T$ is usually randomly generated using a shared $\key$ accessible during both watermark generation and detection.
At first glance, our formulation may appear abstract, but its flexibility enables existing watermarking schemes to be interpreted as special cases within this framework.


\begin{example}[Existing watermarking schemes as special cases]\label{Ex: watermark schemes}
    In the Green-Red List watermarking scheme \citep{kirchenbauer2023watermark}, at each position $t$, the vocabulary $\calV$ is randomly split into a green list $\calG$ and a red list $\calR$, with $|\calG|=\rho|\calV|$ for some $\rho\in(0,1)$. This split is represented by a $|\calV|$-dimensional binary auxiliary variable $\zeta_t$, indexed by $x \in \calV$, where $\zeta_t(x)=1$ means $x\in\calG$; otherwise, $x\in\calR$.  
    The watermarking scheme is as follows:
    \begin{enumerate}[noitemsep,topsep=0pt,parsep=0pt,partopsep=0pt, leftmargin=*,label=--]
        \item Compute a hash of the previous token $X_{t-1}$ using a hash function $h:\calV\times \bbR\to\bbR$ and a shared secret $\key$, i.e., $h(X_{t-1},\key)$.
        \item Use $h(X_{t-1},\key)$ as a seed to uniformly sample the auxiliary variable $\zeta_t$ from the set $\{\zeta\in\{0,1\}^{|\calV|}: \|\zeta\|_1=\rho|\calV|\}$ to construct the green list $\calG$.
        \item Sample $X_t$ from the adjusted NTP distribution which increases the logit of tokens in $\calG$ by $\delta>0$:
    \setlength{\abovedisplayskip}{5pt}
    \setlength{\belowdisplayskip}{-5pt}
    \setlength{\abovedisplayshortskip}{5pt}
    \setlength{\belowdisplayshortskip}{-5pt}
    \begin{equation}
     \resizebox{.5\hsize}{!}{$
        P_{X_t|x_1^{t-1},\zeta_t}(x)= \frac{Q_{X_t|x_1^{t-1}}(x)\exp(\delta\cdot\mathbbm{1}\{\zeta_t(x)=1\})}{\sum_{x\in\calV}Q_{X_t|x_1^{t-1}}(x)\exp(\delta\cdot\mathbbm{1}\{\zeta_t(x)=1\})}.
         $}
    \end{equation}
    \end{enumerate}
    
    How our formulation encompasses several other watermarking schemes is provided in Appendix \ref{App: existing watermark}.
\end{example}




\textbf{Hypothesis Testing for Watermark Detection.}
Note that a sequence $X_1^T$ generated by a watermarked LLM depends on $\zeta_1^T$, while $X_1^T$ and $\zeta_1^T$ are independent if written by humans.
Therefore, detection involves distinguishing the following two hypotheses based on the pair $(X_1^T,\zeta_1^T)$:

\begin{itemize}
[leftmargin=.2cm,topsep=-0cm, parsep=-0cm,labelsep=-0.5cm,align=left]
    \item $\rmH_0$: $X_1^T$ is generated by a human, i.e., $(X_1^T,\zeta_1^T)\sim Q_{X_1^T}\otimes P_{\zeta_1^T}$;
    \item $\rmH_1$: $X_1^T$ is generated by a watermarked LLM, i.e., $(X_1^T,\zeta_1^T)\sim P_{X_1^T,\zeta_1^T}$.
\end{itemize}
We consider a model-agnostic detector $\gamma: \calV^{T}\times \calZ^T \to \{0,1\}$, which maps $(X_1^T,\zeta_1^T)$ to the hypothesis index (see Figure \ref{Fig: system}). 
In theory, we assume that the auxiliary sequence $\zeta_1^T$ can be fully recovered from $X_1^T$ and the common randomness,
\emph{while this assumption is dropped in practice. }

Detection performance is measured by the Type-I (false positive) and Type-II (false negative) errors: 
\begin{align}
    \text{Type-I (i.e., FPR): }\quad \beta_0(\gamma,Q_{X_1^T},P_{\zeta_1^T})\!&\coloneqq\Pr(\gamma(X_1^T,\zeta_1^T)\ne 0 \mid \rmH_0),\\
     \text{Type-II (i.e., $1-$TPR): } \quad \beta_1(\gamma,P_{X_1^T,\zeta_1^T} )\!&\coloneqq\Pr(\gamma(X_1^T,\zeta_1^T)\ne 1 \mid \rmH_1).
    \label{Def: type I and II}
\end{align}
\noindent\textbf{Optimization Problem.}  
Given that human-generated texts can vary widely, within our proposed framework, we aim to control the \emph{worst-case} Type-I error $\sup_{Q_{X_1^T}}\beta_0(\gamma,Q_{X_1^T},P_{\zeta_1^T})$ at a given $\alpha\in(0,1)$ while minimizing Type-II error. Our objective is to design an $\epsilon$-distorted watermarking scheme and a model-agnostic detector by solving the following optimization:
\setlength{\abovedisplayskip}{5pt}
\setlength{\belowdisplayskip}{-3pt}
\setlength{\abovedisplayshortskip}{5pt}
\setlength{\belowdisplayshortskip}{-3pt}
\begin{align}
   &\inf_{\gamma,P_{X_1^T,\zeta_1^T}} \beta_1(\gamma,P_{X_1^T,\zeta_1^T} ) \tag{Opt-O} \label{Eq: opt-O} \quad \text{ s.t. } \sup_{Q_{X_1^T}}\beta_0(\gamma,Q_{X_1^T},P_{\zeta_1^T})\leq \alpha,\; \sD(P_{X_1^{T}}, Q_{X_1^{T}})\leq \epsilon.  
\end{align}
The optimal objective value, denoted as $\beta_1^*(Q_{X_1^{T}}, \alpha, \epsilon)$, is termed as \emph{universally minimum Type-II error}. This universality is due to its applicability across all potential detectors and watermarking schemes, as well as its validity under the worst-case Type-I error scenario.



\section{Jointly Optimal Watermarking Scheme and Detector}\label{Sec: joint optimize}
In this section, we aim to solve the optimization in \eqref{Eq: opt-O} and identify the jointly optimal watermarking scheme and detector. However, solving \eqref{Eq: opt-O} is challenging due to the binary nature of $\gamma$ and the vast set of possible $\gamma$, sized $2^{|\calV|^T|\calZ|^T}$. To address this, we begin with a specific $\gamma(X_1^T,\zeta_1^T)=\mathbbm{1}\{(X_1^T,\zeta_1^T)\in\calA_1\}$, where $\calA_1$ defines the acceptance region for $\rmH_1$, aiming to uncover a potential structure for the optimal detector.
To this end, we  simplify \eqref{Eq: opt-O} as
\setlength{\abovedisplayskip}{5pt}
\setlength{\belowdisplayskip}{-3pt}
\setlength{\abovedisplayshortskip}{5pt}
\setlength{\belowdisplayshortskip}{-3pt}
\begin{align}
&\inf_{P_{X_1^T,\zeta_1^T}}\beta_1(\gamma,P_{X_1^T,\zeta_1^T} ) \tag{Opt-I}\label{Eq: opt-I}\quad \text{s.t. } \sup_{Q_{X_1^T}} \beta_0(\gamma,Q_{X_1^T},P_{\zeta_1^T}) \leq \alpha, \; \sD(P_{X_1^T}, Q_{X_1^T}) \leq \epsilon. 
\end{align}

\paragraph{Error-Distortion Tradeoff.} We first derive a lower bound for the minimum Type-II error in \eqref{Eq: opt-I}, which surprisingly does not depend on the selected detector 
$\gamma$ and therefore also applies to \eqref{Eq: opt-O}. We then pinpoint a type of detector and watermarking scheme that attains this lower bound, indicating that it represents the universally minimum Type-II error. Thus, the proposed detector and watermarking scheme are jointly optimal, as detailed in Theorem~\ref{Thm: optimal detector}.
The theorem below establishes this universal minimum Type-II error for all feasible watermarking schemes and detectors.


\vspace{5pt}
\begin{theorem}[Universally minimum Type-II error]\label{Thm: universal min typeII}
    The universally minimum Type-II error attained from \eqref{Eq: opt-O} is 
    \begin{small}
    \begin{equation}
     \beta_1^*(Q_{X_1^T},\alpha, \epsilon)= \!\!\! \min_{\substack{P_{X_1^T}:  \sD(P_{X_1^T}, Q_{X_1^T} )\leq \epsilon}}\sum\nolimits_{x_1^T}(P_{X_1^T}(x_1^T)- \alpha)_+,  \label{Eq: Type-II LB}
    \end{equation}
    \end{small}
    which is achieved by the  watermarked  distribution
    \begin{small}
    \begin{equation}
    P_{X_1^T}^*=\argmin_{P_{X_1^T}:\sD(P_{X_1^T}, Q_{X_1^T} )\leq \epsilon}\sum\nolimits_{x_1^T}(P_{X_1^T}(x_1^T)- \alpha)_+.  \label{Eq: opt X marginal}
    \end{equation}
    \end{small}
    By setting $\sD$ as total variation distance $\sD_\TV$,  \eqref{Eq: Type-II LB} can be simplified as follows: 
    \begin{small}
    \begin{equation}
        \beta_1^*(Q_{X_1^T},\alpha, \epsilon)=\Big(\sum\nolimits_{x_1^T}(Q_{X_1^T}(x_1^T)- \alpha)_+ -\epsilon \Big)_+, \quad \text{if}\;  \sum\nolimits_{x_1^T}(\alpha-Q_{X_1^T}(x_1^T))_+\geq \epsilon.
    \end{equation}
    \end{small}
\end{theorem}

\begin{wrapfigure}[10]{r}{.35\linewidth}
    \vspace{-25pt}
    \begin{center}
    \includegraphics[trim={1.1cm .4cm .4cm 0}, width=.8\linewidth]{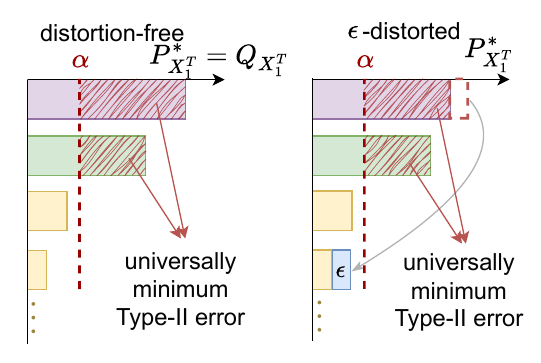}
    \caption{Illustration of error--distortion trade-off.}
    \label{Fig: marginal}
    \end{center}
\end{wrapfigure}
The proof of Theorem \ref{Thm: universal min typeII} is deferred to Appendix \ref{App: pf of Thm: universal min typeII}. 
Theorem \ref{Thm: universal min typeII} shows that, for any watermarking scheme, the fundamental limits of detection performance depend on the original NTP distribution of the LLM. When the original $Q_{X_1^T}$ is more concentrated (low entropy), the minimum achievable detection error increases. This hints that it is inherently difficult to watermark low-entropy text. However, increasing the allowable distortion 
$\epsilon$ can enhance the capacity for reducing detection errors, as illustrated in Figure \ref{Fig: marginal}.
Moreover, we find that $\beta_1^*(Q_{X_1^T},\alpha,\epsilon)$ matches the minimum Type-II error from \citet[Theorem 3.2]{huang2023towards}, which is notably optimal for their specific detector. Our results, however, establish that this is the universally minimum Type-II error across all possible detectors and watermarking schemes, indicating that their detector belongs to the set of optimal detectors described below.


\paragraph{Jointly Optimal Design.}  
We now present the jointly optimal watermarking schemes and detectors that achieve the universally minimum Type-II error in Theorem \ref{Thm: universal min typeII}, i.e., the solution to \eqref{Eq: opt-O}. 
The key takeaway in the optimal design is:
(i) for any given LLM $Q_{X_1^T}$, we can find a valid bijective function $g$ (not unique) to construct a jointly optimal pair of watermarking scheme and detector; (ii) as a result of maximizing the detection performance of dependency against independence between $X_1^T$ and $\zeta_1^T$, the optimal watermarking scheme $P_{\zeta_1^T|X_1^T}^*$ turns out to be a nearly deterministic. 


\begin{theorem}[(Informal Statement) Jointly optimal watermarking schemes and detectors] \label{Thm: optimal detector}
%
 The class of optimal detectors is given by
\setlength{\abovedisplayskip}{0pt}
\setlength{\belowdisplayskip}{2pt}
\setlength{\abovedisplayshortskip}{0pt}
\setlength{\belowdisplayshortskip}{2pt}
\begin{equation}
        \!\!\!\gamma^*\in \Gamma^* \!\coloneqq \! \big\{\gamma\mid \gamma(X_1^T,\zeta_1^T)= \mathbbm{1}\{X_1^T=g(\zeta_1^T)\},\text{ for some bijective } g:\calZ^T\to \cS \supset \cV^T \big\},
    \label{Eq: opt detector class}
\end{equation}
where $\calZ^T=g^{-1}(\calV^T)\cup \{\tilde{\zeta}_1^T\}$ and $\tilde{\zeta}_1^T$ is a redundant auxiliary sequence that is not a preimage of any token sequence for $g$. Given any $(Q_{X_1^T},\epsilon)$,
the corresponding optimal watermarking scheme takes the form $P_{X_1^T,\zeta_1^T}^*=P_{X_1^T}^*P_{\zeta_1^T|X_1^T}^*$, where  $P_{X_1^T}^*$ (c.f.~\eqref{Eq: opt X marginal}) depends on $(Q_{X_1^T},\epsilon)$, and the mapping $P_{\zeta_1^T|X_1^T}^*$  depends on the chosen detector $\gamma^*$. Full details are provided in Appendix \ref{App: pf of Thm: optimal detector}.

\end{theorem}
In Appendix \ref{App: pf of Thm: optimal detector}, a formal and general statement of Theorem \ref{Thm: optimal detector} shows that the optimal class of detectors extends to any valid surjective function $g$ with a different input space. To illustrate this with a simple example, suppose $\calV^T=\{a,b,c\}$. We can define an extended set $\calS=\{a,b,c,\# \}$ and construct a bijective mapping $g$ from an  auxiliary set $\calZ^T=\{1,2,3,4\}$ to  $\calS$.
Such a class $\Gamma^*$ is then universally optimal, meaning that to guarantee the construction of a watermarking scheme that maximizes the detection performance, the detector must be chosen from $\Gamma^*$.

\textbf{Discussions on Theoretically Optimal Watermarking Scheme.} 
A detailed illustration of the optimal watermarking scheme is provided in Appendix \ref{App: toy example of opt watermark}, with several important remarks as follows. 

\textbf{First}, we observe that the derived optimal watermarking scheme  $P_{X_1^T,\zeta_1^T}^*$   for any  $\gamma^*\in\Gamma^*$  is \emph{adaptive} to the original LLM output distribution  $Q_{X_1^T}$. This observation suggests that, to maximize watermark detection performance, watermarking schemes should fully leverage generative modeling and make the sampling of auxiliary sequence adaptive to $Q_{X_1^T}$. 
This approach contrasts with existing watermarking schemes, which typically sample the auxiliary sequence according to a given uniform distribution, without adapting it to the LLM NTP distribution.
This insight serves as a foundation for the design of our practical watermarking scheme, which will be introduced in  Section \ref{section: token-level}.


\textbf{Second}, in order to control the worst-case FPR, the construction of $P_{X_1^T,\zeta_1^T}^*$ relies on the redundant auxiliary sequence $\tilde{\zeta}_1^T$ included in the auxiliary alphabet $\calZ^T$, which satisfies  $\gamma^*(x_1^T,\tilde{\zeta}_1^T)=0$ for all $x_1^T$. This sequence $\tilde{\zeta}_1^T$ plays a critical role in our proposed algorithm. 
Specifically, if $P_{X_1^T}^*(x_1^T)>\alpha$ (indicating a low-entropy text, e.g., a celebrity's name),  it may be mapped to the redundant $\tilde{\zeta}_1^T$, making it harder to detect as watermarked. Thus, the optimal watermarking scheme $P_{X_1^T,\zeta_1^T}^*$ is particularly effective in reducing the FPR for low-entropy texts. 



\paragraph{Practical Challenges.} Given the theoretically optimal structure, there are still a few practical challenges in its direct implementation.   {\bf \circled{1}} Designing a proper function $g$, an alphabet $\calZ^T$ and the corresponding $P_{\scaleto{X_1^T,\zeta_1^T}{7pt}}^*$  is challenging, as $|\calV|^T$ grows exponentially with  $T$, making it hard to identify all pairs  $(x_1^T, \zeta_1^T)$  such that $ x_1^T = g(\zeta_1^T)$.  {\bf \circled{2}} The optimality is derived for static scenarios with a fixed token length $T$, making it unsuitable for dynamic scenarios where the tokens are generated incrementally with varying $T$.  {\bf \circled{3}}  In the theoretical analysis, we assume full recovery of the auxiliary sequence $\zeta_1^T$ during detection. However, in practice, the detector only receives the token sequence $X_1^T$, and reconstructing the auxiliary sequence $\zeta_1^T$ from $X_1^T$ poses a challenge.

These practical constraints motivate the development of a more feasible version of the theoretically optimal scheme. In Section \ref{section: token-level}, we adapt it to a practical token-level optimal scheme to address {\bf \circled{1}}  and  {\bf \circled{2}}; in Section \ref{Sec: algorithm}, we implement the token-level design with a novel algorithm utilizing a surrogate language model and the Gumbel-Max trick \cite{gumbel1954statistical} to overcome {\bf \circled{3}}.

\section{Practical Token-level Optimal Watermarking Design}
\label{section: token-level}


In this section, we present a practical approach that approximates the theoretical framework while ensuring its applicability to real-world scenarios. Building on the fixed-length optimal scheme, we naturally extend it to accommodate varying-length scenarios by constructing the optimal watermarking scheme incrementally for each token, i.e., $P^*_{X_t,\zeta_t|x_1^{t-1},\zeta_1^{t-1}}$ for all $t=1,2,\ldots$.

To lay the groundwork, we first revisit heuristic detectors for some existing watermarking schemes.

\begin{example}[Examples of heuristic detectors]
 Two example detectors from existing works: 
    \begin{itemize}[leftmargin=*,noitemsep,topsep=-5pt]
        \item Green-Red List watermark detector \citep{kirchenbauer2023watermark}: $\gamma(X_1^T,\zeta_1^T)=\mathbbm{1} \{\frac{2}{\sqrt{T}}(\sum_{t=1}^T \mathbbm{1}\{\zeta_t(X_t)=1\}-\rho T) \geq \lambda\}$ where $\lambda>0$, $\rho \in (0,1)$,  and $\zeta_t=(\zeta_t(x))_{x\in\calV}$ is uniformly sampled from $\{\zeta\!\in\!\{0,1\}^{|\calV|}\!: \!\|\zeta\|_1\!=\!\rho|\calV|\}$ with the seed $\mathrm{hash}(\!X_{t-1},\key\!)$. 

        \item Gumbel-Max watermark detector \citep{gumbel2023}: $\gamma(X_1^T,\zeta_1^T)=\mathbbm{1} \{-\sum_{t=1}^T\log(1-\zeta_t(X_t))\geq \lambda\} \}$ where $\lambda>0$, and $\zeta_t=(\zeta_t(x))_{x\in\calV}$ is uniformly sampled from $[0,1]^{|\calV|}$ with the seed $\mathrm{hash}(X_{t-1}^{t-n},\key)$ for some $n$.
    \end{itemize}
\end{example}


\textbf{Practical Detector Design.}
We observe that the commonly used heuristic detectors take the non-optimal form by averaging the test statistics over each token: $\gamma(X_1^T,\zeta_1^T)=\mathbbm{1} \big\{\frac{1}{T}\sum_{t=1}^T\text{Test Statistics of }(X_t,\zeta_t) \geq \lambda \big\}$.
This token-level design provides several advantages: (1) incremental computation of detectors for any 
 $T$ and 2) token-level watermarking with the alphabet depending only on the fixed size $ |\calV| $. 
Inspired by these detectors, we propose the following detector to address the issues {\bf \circled{1}}  and  {\bf \circled{2}}  mentioned earlier: 
\vspace{2pt}
\setlength{\abovedisplayskip}{1pt}
\setlength{\belowdisplayskip}{0pt}
\setlength{\abovedisplayshortskip}{1pt}
\setlength{\belowdisplayshortskip}{0pt}
\begin{equation}
    \gamma_{\mathrm{tk}}(X_1^T,\zeta_1^T)=\mathbbm{1} \Bigg\{\frac{1}{T}\sum_{t=1}^T\underbrace{\mathbbm{1}\{X_t=g_{\mathrm{tk}}(\zeta_t)\}}_{\text{Token-level adaptation of \eqref{Eq: opt detector class}}} \geq \lambda \Bigg\}, \label{Eq: token level detector}
\end{equation}
\vspace{1pt}
for some bijective function $g_{\mathrm{tk}}:\calZ\to\calS \supset \calV$, where $\calZ=g_{\rm{tk}}^{-1}(\calV)\cup\{\tilde{\zeta}\}$ for some redundant auxiliary value $\tilde{\zeta}$ not being the preimage of any token $x\in\calV$ for $g$.
This detector combines the advantages of existing token-level detectors with the optimal design from Theorem \ref{Thm: optimal detector}. The test statistic for each token  $(X_t, \zeta_t)$  is optimal at position $t$, enabling a token-level optimal watermarking scheme that improves the detection performance for each token. 

\textbf{Token-Level Optimal Watermarking Scheme.} Following the same rule in Theorem \ref{Thm: optimal detector} and Appendix \ref{App: pf of Thm: optimal detector}, the token-level optimal watermarking scheme is \emph{sequentially} constructed  based on $\mathbbm{1}\{X_t=g_{\mathrm{tk}}(\zeta_t)\}$ in \eqref{Eq: token level detector} and the NTP distribution at each position $t$, acting only on the token vocabulary $\calV$. This approach addresses the challenges  {\bf \circled{1}} and  {\bf \circled{2}} as well.
Notably, the resulting distribution of the token-level optimal scheme for the auxiliary variable $\zeta_t$ is adaptive to the original NTP distribution $Q_{X_t|x_1^{t-1}}$. Moreover, the resulting distribution on $X_t$ is given by (comparable to $P_{X_1^T}^*$ in Theorem \ref{Thm: optimal detector}) 

\setlength{\abovedisplayskip}{-5pt}
\setlength{\belowdisplayskip}{-10pt}
\setlength{\abovedisplayshortskip}{-5pt}
\setlength{\belowdisplayshortskip}{-10pt}
\begin{small}
\begin{equation}
P_{X_t|x_1^{t-1}}^* \coloneqq \!\! \argmin_{\substack{\scaleto{ P_{X_t|x_1^{t-1}}:\sD(P_{X_t|x_1^{t-1}}, Q_{X_t|x_1^{t-1}})\leq \epsilon}{8pt}}}\sum_{x\in\calV}(P_{X_t|x_1^{t-1}}(x)- \eta)_+, \label{Eq: token opt P_X}
\end{equation}
\end{small}
\noindent where $\eta\in(0,1)$ is the \emph{token-level FPR constraint}, which is typically much greater than the sequence-level FPR constraint $\alpha$. With a proper choice of $\eta$, we can effectively control $\alpha$.
Under this scheme, we add watermarks to the generated tokens incrementally, with maximum detection performance at each token.
The details are deferred to Appendix \ref{App: pf of Lem: token opt watermark} and the algorithm is provided in Section \ref{Sec: algorithm}.

\textbf{Performance Analysis.}  We evaluate the Type-I (FPR) and Type-II ($1-$TPR) errors of this scheme over the entire sequence (cf.\ \eqref{Def: type I and II}). 

\begin{lemma}[(Informal Statement) Token-level optimal watermarking detection errors]\label{Lem: token level errors}
Under the detector $\gamma_{\mathrm{tk}}$ in \eqref{Eq: token level detector} and its corresponding token-level optimal watermarking scheme with  $\eta\!\in\!(0, \min\{ 1, (\alpha/\binom{T}{\lceil T\lambda \rceil})^{\frac{1}{\lceil T\lambda \rceil}}\}]$, for a length-$T$ sequence: 
(i) the worst-case Type-I error  
$\sup_{\scaleto{Q_{X_1^T}}{7pt}}\beta_0\leq \alpha$; (ii) 
if token positions more than $n$ apart are assumed to be independent,   with a suitable detector threshold, the Type-II error decays exponentially in $T/n$.
\end{lemma}
Although the token-level optimal watermarking scheme may not be optimal at the sequence level, we show that it maintains good performance with a proper choice of token-level FPR $\eta$. The formal statement is provided in Appendix \ref{App: pf of Lem: token level errors}. 



Furthermore, we observe that even without explicitly introducing robustness to the token-level optimal watermarking scheme, it inherently leads to some robustness against token replacement. The following result shows that if the auxiliary sequence $\zeta_1^T$ is shared between the LLM and the detector $\gamma_{\mathrm{tk}}$ (cf.\ \eqref{Eq: token level detector}), the token at position $t$ can be replaced with probability $\Pr(\zeta_t \text{ is redundant})$ without affecting detector output.
\begin{proposition}[Robustness against token replacement]\label{Prop: robust against replace}
    Under the detector $\gamma_{\mathrm{tk}}$ in \ \eqref{Eq: token level detector} and its corresponding token-level optimal watermarking scheme, the expected number of tokens that can be randomly replaced in $X_1^T$ without compromising detection performance is $\sum_{t=1}^T\bbE_{X_1^{t-1}}\big[\sum_{x\in\calV}\big(P^*_{X_t|X_1^{t-1}}(x|X_1^{t-1})-\eta\big)_+ \big]$,
with $P^*_{X_t|X_1^{t-1}}$ given in \eqref{Eq: token opt P_X}. 
\end{proposition}

\section{DAWA: \texorpdfstring{\underline{D}}istribution-\texorpdfstring{\underline{A}}daptive \texorpdfstring{\underline{W}}atermarking \texorpdfstring{\underline{A}}lgorithm}\label{Sec: algorithm}
In this section, we implement the token-level design presented in Section \ref{section: token-level} by introducing a novel, distortion-free watermarking algorithm, DAWA (\textbf{D}istribution-\textbf{A}daptive \textbf{W}atermarking \textbf{A}lgorithm). To address the challenge {\bf \circled{3}} of recovering the auxiliary sequence at the detector without knowledge of the original LLM and prompt, we utilize some novel tricks, including a surrogate model and Gumbel-Max sampling, which also ensures model-agnosticism and computational efficiency. 

\paragraph{Novel Trick for Auxiliary Sequence Transmission.}
Since the resulting optimal distribution of the auxiliary variable $\zeta_t$ from Section \ref{section: token-level} is adaptive to the original NTP distribution of  LLM, it is not likely to completely reconstruct it at the detection phase without access to the LLM or prompt.  
One possible workaround is enforcing $P_{\zeta_t} = \Unif(\calZ)$ for both watermark generation and detection. While this method \cite{he2025statistical} simplifies the transmission, it leads to a much higher minimum Type-II error compared to $\beta_1^*(Q_{X_1^T},\alpha, \epsilon)$ (cf.\  \eqref{Eq: Type-II LB}),  indicating a trade-off between detection performance and a non-distribution-adaptive design.

We thus introduce a novel trick to transmit the auxiliary sequence by integrating a \emph{surrogate language model (SLM)} during the detection phase and the Gumbel-Max trick \cite{gumbel1954statistical} for sampling $\zeta_t$. This SLM, much smaller than the watermarked LLM and possibly from a different family (as long as it shares the same tokenizer) approximates the watermarked distributions $\{P_{\scaleto{X_t|X_1^{t-1}}{7pt}}^*\}_{t=1,2,\ldots}$ using only the text \(X_1^T\), without the prompt. With the approximated  $P_{\scaleto{X_t|X_1^{t-1}}{7pt}}^*$, we reconstruct the sampling distribution of $\zeta_t$ and sample it using the Gumbel-Max trick with the $\mathtt{key}$ shared from watermark generation. 

Theoretically, the SLM’s approximation error has limited impact on detection performance, since the watermarking algorithm is provably resilient to token replacement attacks (cf.~Proposition \ref{Prop: robust against replace}).
In Section \ref{Sec: experiments}, our experiments highlight that, even with \emph{incomplete recovery} of $\zeta_1^T$ during detection, the DAWA algorithm with this novel trick exhibits superior detection performance and greater resilience against token replacement attack, surpassing baseline watermarking schemes.


\begin{figure*}[t]
    \centering
    \includegraphics[width=1\linewidth]{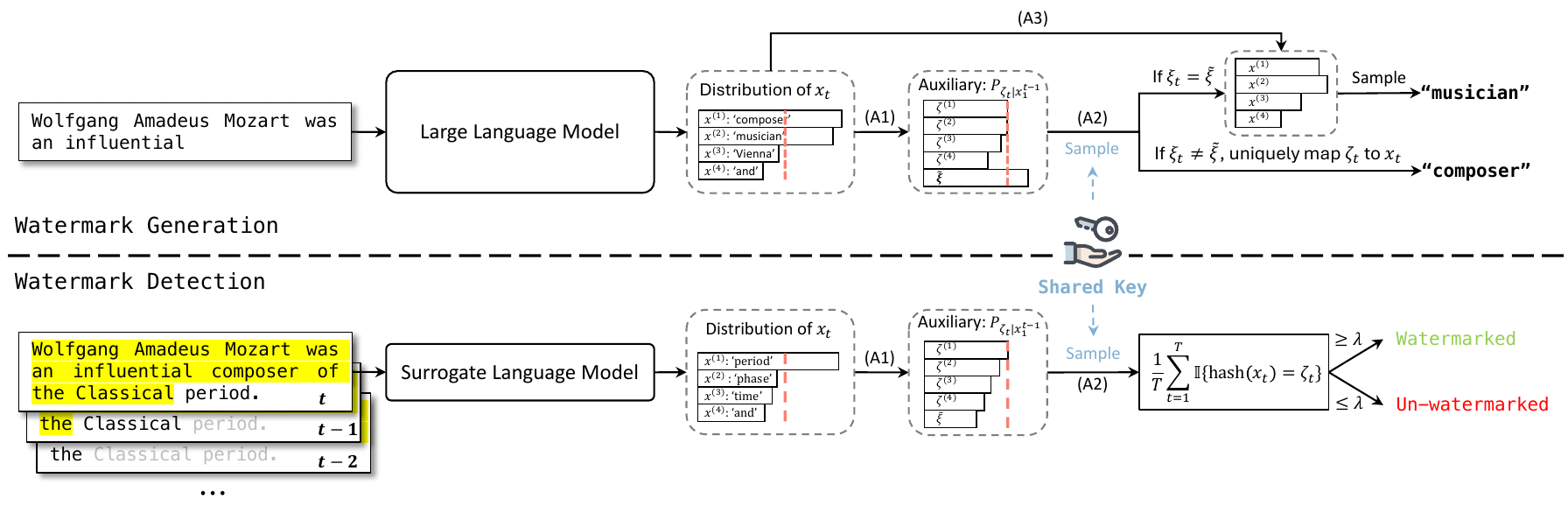}
    \caption{Workflow of our practical algorithm (DAWA) for watermark generation and detection. \eqref{eq:A1}: construct the sampling distribution of auxiliary variable $\zeta_t$ based on $Q_{x_t|x_1^{t-1},\pt}$;  \eqref{eq:A2}: sample $\zeta_t$ using the Gumbel-Max trick and a shared key; \eqref{eq:A3}: adjust the NTP distribution of $x_t$ with $\eta$.  
    }
    \label{fig:watermark_practice}
\end{figure*} 

\paragraph{DAWA Details.}
From the detector in \eqref{Eq: token level detector}, 
we first choose $g_\mathrm{tk}$ that depends on a hash function $h_{\key}$:
    \setlength{\abovedisplayskip}{-1pt}
    \setlength{\abovedisplayshortskip}{-1pt}
\vspace{1pt}
\begin{equation}
    \gamma_{\mathrm{dawa}}(X_1^T,\zeta_1^T)\!=\!\mathbbm{1} \Big\{\frac{1}{T}\sum_{t\in[T]} \! \mathbbm{1}\{h_{\key}(X_t)=\zeta_t\} \!\! \geq \!\! \lambda  \Big\}. \label{Eq: implement dec}
\end{equation}
\vspace{2pt}

DAWA is an implementation of the distortion-free ($\epsilon=0$) token-level optimal watermarking scheme (cf.~Section \ref{section: token-level}), in which the auxiliary variable is sampled adaptively based on the original NTP distribution $Q_{X_t|x_1^{t-1},\pt}$, as illustrated in Figure \ref{fig:watermark_practice} and detailed in Appendix \ref{App: pseudo codes}. Below, we elaborate on the key steps.

\underline{\it Watermark Generation.} 
Using the detector $\gamma_{\mathrm{dawa}}$ from \eqref{Eq: implement dec}, we define the auxiliary alphabet $\calZ$ from unique mappings \(\{h_{\key}(x)\}_{x \in \calV}\) and add a redundant \(\tilde{\zeta}\). At each $t$, $P_{\zeta_t|x_1^{t-1},\pt}$ is adaptive to $Q_{X_t|x_1^{t-1},\pt}$:
    \setlength{\abovedisplayskip}{0pt}
    \setlength{\belowdisplayskip}{0pt}
    \setlength{\abovedisplayshortskip}{0pt}
    \setlength{\belowdisplayshortskip}{0pt}
\begin{small}
\begin{equation}
\left\{\!\!\!
\begin{array}{l}
  P_{\zeta_t|x_1^{t-1},\pt}(\zeta)\leftarrow (Q_{X_t|x_1^{t-1},\pt}(h^{-1}_{\key}(\zeta))\wedge \eta), \quad \forall {\zeta\in\calZ\backslash\{\tilde{\zeta}\}}.  \\
  P_{\zeta_t|x_1^{t-1},\pt}(\tilde{\zeta})\leftarrow \sum_{x\in\calV}(Q_{X_t|x_1^{t-1},\pt}(x)-\eta)_+.
\end{array}
\right.
\tag{A1} \label{eq:A1}
\end{equation}
\end{small}

The Gumbel-Max trick is then used to sample \(\zeta_t\):
\begin{small}
\begin{equation}
    \zeta_t \leftarrow \argmax _{\zeta\in\calZ}\log(P_{\zeta_t|x_1^{t-1},\pt}(\zeta))+G_{t,\zeta}.
     \tag{A2} \label{eq:A2} 
\end{equation}
\end{small}
where $G_{t,\zeta}$ is sampled from the Gumbel distribution using a shared $\mathtt{key}$ and the previous tokens.
If \(\zeta_t\) is non-redundant, let \(x_t = h^{-1}_{\key}(\zeta_t)\); otherwise, $x_t$ is sampled via a multinomial distribution:
\begin{small}
\begin{equation}
    x_t \sim \bigg(\frac{(Q_{X_t|x_1^{t-1},\pt}(x)-\eta)_+}{\sum_{x\in\calV}\big(Q_{X_t|x_1^{t-1},\pt}(x)-\eta\big)_{+}} \bigg)_{x\in \calV}.
    \tag{A3} \label{eq:A3}
\end{equation}
\end{small}

\underline{\it Watermark Detection.} A surrogate NTP distribution \(\tilQ_{X_t|x_1^{t-1}}\) is approximated by the SLM for each \(t\). We then use \eqref{eq:A1} to approximate \(P_{\zeta_t|x_1^{t-1},\pt}\) from \(\tilQ_{X_t|x_1^{t-1}}\) and sample \(\zeta_t\) using \eqref{eq:A2} with the shared $\mathtt{key}$. At each position \(t\), the score \(\mathbbm{1}\{h_{\key}(x_t) = \zeta_t\}\) is $1$ if $\zeta_t$ non-redundant and $0$ otherwise. Compute \(\frac{1}{T}\sum_{t=1}^T\mathbbm{1}\{h_{\key}(x_t) = \zeta_t\}\) and compare with a threshold \(\lambda\). If above \(\lambda\), the text is detected as watermarked.

\section{Experiments and Discussions}\label{Sec: experiments}
\paragraph{Experiment Settings.}
We now introduce the setup details of our experiments.

\underline{\it Implementation Details.} Our approach is implemented on two language models: Llama2-13B \citep{touvron2023llama}, and Mistral-8$\times$7B \citep{jiang2023mistral}. Llama2-7B serves as the surrogate model for Llama2-13B, while Mistral-7B is used as the surrogate model for Mistral-8$\times$7B. We conduct our experiments on Nvidia A100 GPUs. In DAWA, we set $\eta=0.2$ and $T=200$. 


\underline{\it Baselines.} We compare our methods with three existing watermarking methods: KGW+23 \citep{kirchenbauer2023watermark}, EXP-edit \citep{kuditipudi2023robust}, Gumbel-Max \citep{gumbel2023}, and HCW+23 \citep{hu2023unbiased}, where the EXP-edit, Gumbel-Max and HCW+23 are distortion-free watermarks.  KGW+23 employs the prior 1 token as a hash to create a green/red list, with the watermark strength set at 2. 

\begin{table*}[!t]\scriptsize
\centering
\setlength{\tabcolsep}{1pt}
\caption{\small Detection performance on clean and edited text across different LLMs and datasets.}
\label{tab:combined_detection}
\resizebox{1\textwidth}{!}{
\begin{tabular}{llcccccccccccc}
\toprule
\multicolumn{2}{c}{} & \multicolumn{6}{c}{\textbf{Clean Text}} & \multicolumn{6}{c}{\textbf{Token Replacement Attack}} \\
\cmidrule(lr){3-8} \cmidrule(lr){9-14}
\multicolumn{1}{c}{\textbf{LLM}} & \textbf{Method} & 
\multicolumn{3}{c}{C4} & \multicolumn{3}{c}{ELI5} & 
\multicolumn{3}{c}{C4} & \multicolumn{3}{c}{ELI5} \\
\cmidrule(lr){3-5} \cmidrule(lr){6-8} \cmidrule(lr){9-11} \cmidrule(lr){12-14}
& & ROC-AUC & TP@1\% FP & TP@10\% FP & ROC-AUC & TP@1\% FP & TP@10\% FP & ROC-AUC & TP@1\% FP & TP@10\% FP & ROC-AUC & TP@1\% FP & TP@10\% FP \\
\midrule
\multirow{4}{*}{Llama2-13B} 
& KGW+23        & 0.995 & 0.991 & 1.000 & 0.989 & 0.974 & 0.986 & 0.965 & 0.833 & 0.952 & 0.973 & 0.892 & 0.973 \\
& EXP-edit     & 0.986 & 0.968 & 0.996 & 0.983 & 0.960 & 0.995 & 0.973 & 0.857 & 0.978 & 0.967 & 0.889 & 0.975 \\
& Gumbel-Max   & 0.996 & 0.993 & 0.994 & 0.999 & 0.991 & 0.994 & 0.968 & 0.858 & 0.970 & 0.965 & 0.887 & 0.975 \\
& HCW+23 &0.994 &0.928 &0.986 &0.991 &0.888 &0.978 &0.890 &0.268 &0.714 &0.893 &0.278 &0.692 \\
& \cellcolor{gray!30}\textbf{Ours} 
               & \cellcolor{gray!30}0.999 & \cellcolor{gray!30}0.998 & \cellcolor{gray!30}1.000 
               & \cellcolor{gray!30}0.998 & \cellcolor{gray!30}0.997 & \cellcolor{gray!30}1.000 
               & \cellcolor{gray!30}0.989 & \cellcolor{gray!30}0.860 & \cellcolor{gray!30}0.976 
               & \cellcolor{gray!30}0.995 & \cellcolor{gray!30}0.969 & \cellcolor{gray!30}0.994 \\
\midrule
\multirow{4}{*}{Mistral-8$\times$7B} 
& KGW+23        & 0.997 & 0.995 & 1.000 & 0.993 & 0.983 & 0.994 & 0.977 & 0.860 & 0.962 & 0.969 & 0.890 & 0.970 \\
& EXP-edit     & 0.993 & 0.970 & 0.997 & 0.994 & 0.972 & 0.996 & 0.980 & 0.861 & 0.975 & 0.983 & 0.932 & 0.988 \\
& Gumbel-Max   & 0.994 & 0.989 & 0.999 & 0.987 & 0.970 & 0.990 & 0.972 & 0.865 & 0.960 & 0.971 & 0.889 & 0.975 \\
& HCW+23 &0.998 &0.986 &0.994 &0.999 &0.992 &1.000 &0.885 &0.364 &0.674 &0.878 &0.252 &0.668 \\
& \cellcolor{gray!30}\textbf{Ours} 
               & \cellcolor{gray!30}0.999 & \cellcolor{gray!30}0.998 & \cellcolor{gray!30}1.000 
               & \cellcolor{gray!30}0.999 & \cellcolor{gray!30}0.999 & \cellcolor{gray!30}1.000 
               & \cellcolor{gray!30}0.990 & \cellcolor{gray!30}0.881 & \cellcolor{gray!30}0.966 
               & \cellcolor{gray!30}0.993 & \cellcolor{gray!30}0.991 & \cellcolor{gray!30}0.995 \\
\bottomrule
\end{tabular}
}
\end{table*}

\begin{wraptable}[7]{r}{.45\linewidth} 
\vspace{-10pt}
\setlength{\tabcolsep}{10pt}
\scriptsize 
\centering
\caption{Empirical entropy comparison between C4 and ELI5 datasets.}
\label{tab:entropy_comparison}
\begin{tabular}{lcc}
\toprule
\textbf{Model} & C4 (entropy) & ELI5 (entropy) \\
\midrule
Llama2-13B & 0.547 & 0.272 \\
Mistral-8$\times$7B & 1.475 & 1.427 \\
\bottomrule
\end{tabular}
\end{wraptable}
\underline{\it Dataset and Prompt.} Our experiments are conducted using two distinct datasets. The first is an open-ended \textbf{high-entropy} generation dataset, a realnewslike subset from C4 \citep{raffel2020exploring}. The second is a relatively \textbf{low-entropy} generation dataset, ELI5 \citep{fan2019eli5}. The realnewslike subset of C4 is tailored specifically to include high-quality journalistic content that mimics the style and format of real-world news articles. As shown in Table \ref{tab:entropy_comparison}, the C4 dataset consistently exhibits higher empirical entropy than the ELI5 dataset across different models. We utilize the first two sentences of each text as prompts and the following 200 tokens as human-generated text. The ELI5 dataset is specifically designed for the task of long-form question answering (QA), with the goal of providing detailed explanations for complex questions. We use each question as a prompt and its answer as human-generated text.

\underline{\it Evaluation Metrics.} 
To evaluate the performance of watermark detection, we report the ROC-AUC score, where the ROC curve shows the True Positive (TP) Rate against the False Positive (FP) Rate. A higher ROC-AUC score indicates better overall performance. 
The detection threshold $\lambda$ is determined empirically by the ROC-AUC score function based on unwatermarked and watermarked sentences. 

\subsection{Main Experimental Results}
\textbf{Watermark Detection Performance.} 
To explore our detection performance at a very low FPR, we conduct experiments using Llama2-13B on $10^5$ texts ($200$-length) from the Wikipedia dataset and compute the TPR at $1\mathrm{e}{-01}$, $1\mathrm{e}{-02}$, $1\mathrm{e}{-03}$, $1\mathrm{e}{-04}$, and $1\mathrm{e}{-05}$ FPR respectively. Figure \ref{fig:low_fpr} shows that  DAWA significantly outperforms other baselines. 

Furthermore, we compare the detection performance 
across various language models and tasks, as presented in Table \ref{tab:combined_detection}. Our DAWA demonstrates superior performance, especially on the relatively low-entropy QA dataset, validating Theorem \ref{Thm: optimal detector} and Lemma \ref{Lem: token level errors}. This success stems from the design of our watermarking scheme, which reduces the likelihood of low-entropy tokens being falsely detected as watermarked, thereby lowering the FPR. 
 Moreover, this suggests that even without knowing the watermarked LLM during detection, we can still use the proposed SLM and Gumbel-Max trick to successfully detect the watermark.

We assess the robustness of  DAWA against a \textbf{token replacement attack} to validate Proposition \ref{Prop: robust against replace}. 
For each watermarked text, we randomly mask 50\% of the tokens and use T5-large \citep{raffel2020exploring} to predict the replacement for each masked token based on the context. 
Table~\ref{tab:combined_detection} exhibits watermark detection performance under token replacement attacks across different models and tasks. 
Our DAWA remains high ROC-AUC, TPR@1\%FPR, and TPR@10\%FPR under this attack compared with other baselines.  



\begin{table}[] 
\scriptsize
    \centering
    \setlength{\tabcolsep}{10pt}
    \caption{Comparison of BLEU score and perplexity across different watermarking methods.}
    \vspace{1em}
    \begin{tabular}{lcccccc}
        \toprule
        {Methods} & {Human} & {KGW+23} & {EXP-Edit} & {Gumbel-Max} &{HCW+23} &  \textbf{Ours} \\
        \midrule
        BLEU Score $\uparrow$ & 0.219 & 0.158 & 0.203 & 0.210 &0.207 & 0.214 \\
         Perplexity $\downarrow$ & 8.846 & 13.472	&10.126 & 9.910 &10.115	& 10.034  \\
        \bottomrule
    \end{tabular}
    \label{tab:text_quality}
\end{table}

\paragraph{Watermarked Text Quality.} To evaluate the quality of watermarked text generated by our watermarking methods, we report the perplexity (median) on C4 dataset using GPT-3 \citep{brown2020language}, and the BLEU score on the machine translation task using the WMT19 dataset \citep{barrault2019findings} and mBART Model \citep{liu2020mbart}, as shown in Table \ref{tab:text_quality}. 
It can be observed that our scheme achieves a higher BLEU score and a lower perplexity closer to the unwatermarked one (10.020), both close to the score on human datasets. 
This demonstrates that our distortion-free scheme, employing an NTP distribution-adaptive approach, has minimal impact on the generated text quality, preserving its naturalness and coherence.

\paragraph{Ablation Study and Additional Results.} In Appendix \ref{App: add exp}, we further show that (1) our DAWA is efficient and does not affect generation time; (2) detection remains accurate and robust even with a much smaller SLM from a \emph{different model family} and without prompts; (3) TPR increases with longer token length $T$; and (4) our theoretical choice of $\eta$ effectively controls the empirical FPR.

\subsection{Extension  Towards Stronger Robustness}\label{Sec:extension}
In Appendix \ref{app: assess DAWA broad attacks}, Table \ref{tab:more robust detection}, we first empirically assess the robustness of DAWA against \textbf{random deletion and paraphrasing attacks}. DAWA outperforms  Gumbel-Max and KGW+23 in deletion robustness and matches their performance under paraphrasing. These results confirm that DAWA remains competitive while balancing robustness, efficiency, and detection accuracy, with the potential to demonstrate a graceful trade-off based on our theoretical analyses.

\textbf{Theoretical Extension.} As a step towards even stronger robustness,  Appendix~\ref{App: ext to robust} outlines how our theoretical framework and optimal solutions extend to scenarios involving a wide range of attacks, including \emph{semantic-invariant attacks}. We characterize the detectability--distortion--robustness trade-off and show the closed-form optimal \emph{robust} watermarking scheme--detector pairs. These findings offer valuable insights for designing advanced semantic-based watermarking algorithms that are resilient to such attacks in the future.



\section*{Acknowledgements}
This work was performed while Yepeng Liu and Yuheng Bu were with the Department of Electrical and Computer Engineering at the University of Florida. They acknowledge UFIT Research Computing for providing computational resources and support that contributed to the research results reported in this publication. This work was also conducted while Haiyun He was a postdoctoral associate with the Center for Applied Mathematics at Cornell University and while Ziqiao Wang was a Ph.D.~student with the School of Electrical Engineering and Computer Science at the University of Ottawa.

\newpage

\bibliography{sample.bib}
\bibliographystyle{unsrtnat}

\newpage
\appendix

\section{Other Related Literature}\label{App: other literature}
In the past few years, many in-process LLM watermarking methods have been proposed \citep{liu2024survey, wu2025survey, zhou2024bileve, wu2025ensemble, wu2025analyzing, chen2024mark, liu2025context, an2025defending}, including biased and unbiased (distortion-free) ones. Biased watermarks typically alter the next-token prediction (NTP) distribution to increase the likelihood of sampling certain tokens \citep{kirchenbauer2023watermark, zhao2023provable, liu2024adaptive, liu2024a, qu2024provably}. For example, \cite{kirchenbauer2023watermark} divides the vocabulary into green and red lists and slightly enhances the probability of green tokens in the NTP distribution. 
Unbiased watermarks maintain the original NTP distributions or texts unchanged, using various sampling strategies to embed watermarks  \citep{fernandez2023three, fu2024gumbelsoft,wu2023dipmark,zhao2024permute, boroujeny2024multi, christ2024undetectable,giboulot2024watermax, hu2023unbiased, chen2025improved, wu2024distortion}. The Gumbel-Max watermark \citep{gumbel2023} utilizes the Gumbel-Max trick \citep{gumbel1954statistical} to sample the next token, while \citet{kuditipudi2023robust} introduces an inverse transform method for this purpose.

From a theoretical standpoint, most of these designs remain heuristic. While post-process watermarking has been extensively studied from an information-theoretic perspective~\citep{martinian2005authentication, moulin2000information, chen2000design, 1564436, 4418489}, the theory behind in-process watermarking is still limited. Prior efforts \citep{huang2023towards, li2024statistical} have analyzed either the watermark embedder or the detector in isolation, without achieving joint or universally optimal designs.

\section{Other Existing Watermarking Schemes}\label{App: existing watermark}
Here, we discuss additional existing watermarking schemes utilizing auxiliary variables, which can be encompassed within our LLM watermarking formulation.

$\bullet\,$ The \textbf{Gumbel-Max watermarking scheme} \citep{gumbel2023} applies the Gumbel-Max trick \citep{gumbel1954statistical} to sample the next token $X_t$, where the Gumbel variable is exactly the auxiliary variable $\zeta_t$, which is a $|\calV|$-dimensional random vector, indexd by $x$. 
For $t=1,2,\ldots,$
    \begin{enumerate}[noitemsep,topsep=0pt,parsep=0pt,partopsep=0pt, leftmargin=*,label=--]
        \item Compute a hash using the previous $n$ tokens $X_{t-1}^{t-n}$  and a shared secret $\key$, i.e., $h(X_{t-1}^{t-n},\key)$, where  $h:\calV^n\times \bbR\to\bbR$.
        \item Use $h(X_{t-1}^{t-n},\key)$ as a seed to uniformly sample the auxiliary vector $\zeta_t$ from $[0,1]^{|\calV|}$.
        \item Sample $X_t$ using the Gumbel-Max trick
    \begin{equation}
        X_t=\argmax_{x\in\calV} \log Q_{X_t|x_1^{t-1}}(x)-\log(-\log \zeta_t(x)).
    \end{equation}
    \end{enumerate}

    $\bullet\,$ In the \textbf{inverse transform watermarking scheme} \citep{kuditipudi2023robust}, the vocabulary $\calV$ is considered as $[|\calV|]$ and the combination of the uniform random variable and the randomly permuted index vector is the auxiliary variable $\zeta_t$. 
    \begin{enumerate}[noitemsep,topsep=0pt,parsep=0pt,partopsep=0pt, leftmargin=*,label=--]
        \item Use $\key$ as a seed to uniformly and independently sample $\{U_t\}_{t=1}^T$ from $[0,1]$, and $\{\pi_t\}_{t=1}^T$ from the space of permutations over $[|\calV|]$. Let the auxiliary variable $\zeta_t=(U_t,\pi_t)$, for $t=1,2,\ldots,T$.
        \item Sample $X_t$ as follows
    \begin{equation}
        X_t=\pi_t^{-1}\bigg(\min\Big\{i\in[|\calV|]: \sum_{x\in[|\calV|]}\big(Q_{X_t|x_1^{t-1}}(x)\mathbbm{1}\{\pi_t(x)\leq i\}\big) \geq U_t \Big\}\bigg),
    \end{equation}
    where $\pi_t^{-1}$ denotes the inverse permutation.
    \end{enumerate}

   $\bullet\,$ In \textbf{adaptive watermarking} by \citet{liu2024adaptive}, the authors introduce a watermarking scheme that adopts a technique similar to the Green-Red List approach but replaces the hash function with a pretrained neural network $h$. 
   The auxiliary variable $\zeta_t$ is sampled from the set $\{\bv\in\{0,1\}^{|\calV|}: \|\bv\|_1=\rho|\calV|\}$ using the seed $h(\phi(X_{1}^{t-1}),\key)$, where $h$ takes the semantics $\phi(X_{1}^{t-1})$ of the generated text and the secret $\key$ as inputs.  They sample $X_t$ using the same process as the Green-Red List approach.

\section{Proof of Theorem \ref{Thm: universal min typeII}}\label{App: pf of Thm: universal min typeII}

According to the Type-I error constraint, we have $\forall x_1^T\in \calV^T$,
\begin{align}
    \alpha&\geq \max_{Q_{X_1^T}}\bbE_{Q_{X_1^T}P_{\zeta_1^T}}[\mathbbm{1}\{(X_1^T,\zeta_1^T)\in\calA_1\}]\\
    &\geq \bbE_{\delta_{x_1^T}P_{\zeta_1^T}}[\mathbbm{1}\{(X_1^T,\zeta_1^T)\in\calA_1\}]\\
    &=\bbE_{P_{\zeta_1^T}}[\mathbbm{1}\{(x_1^T,\zeta_1^T)\in\calA_1\}]\\
    &=\begin{cases}
        \sum_{\zeta_1^T}P_{\zeta_1^T}(\zeta_1^T)\mathbbm{1}\{(x_1^T,\zeta_1^T)\in\calA_1\}, &\quad \calZ \text{ is discrete;} \\
        \int P_{\zeta_1^T}(\zeta_1^T)\mathbbm{1}\{(x_1^T,\zeta_1^T)\in\calA_1\} \; d \zeta_1^T, &\quad \calZ \text{ is continuous;}
    \end{cases}.
\end{align}

In the following, for notational simplicity, we assume that $\calZ$ is discrete. However, the derivations hold for both discrete $\calZ$ and continuous $\calZ$.
The Type-II error is given by $1-\bbE_{P_{X_1^T,\zeta_1^T}}[\mathbbm{1}\{(X_1^T,\zeta_1^T)\in\calA_1\}]$. We have
\begin{align}\label{Eq: 1-TypeII error}
    \bbE_{P_{X_1^T,\zeta_1^T}}[\mathbbm{1}\{(X_1^T,\zeta_1^T)\in\calA_1\}]=\sum_{x_1^T}\underbrace{\sum_{\zeta_1^T}P_{X_1^T,\zeta_1^T}(x_1^T,\zeta_1^T) \mathbbm{1}\{(x_1^T,\zeta_1^T)\in\calA_1\}}_{C(x_1^T)},
\end{align}
where for all $x_1^T\in\calV^T$, $$C(x_1^T)\leq P_{X_1^T}(x_1^T) \quad \text{and} \quad C(x_1^T)\leq \sum_{\zeta_1^T}P_{\zeta_1^T}(\zeta_1^T) \mathbbm{1}\{(x_1^T,\zeta_1^T)\in\calA_1\}\leq \alpha $$ according to the Type-I error bound. Therefore, 
\begin{align}
    \bbE_{P_{X_1^T,\zeta_1^T}}[\mathbbm{1}\{(X_1^T,\zeta_1^T)\in\calA_1\}]&=\sum_{x_1^T}C(x_1^T) \leq \sum_{x_1^T}(P_{X_1^T}(x_1^T)\wedge \alpha)\\
    &=1-\sum_{x_1^T}(P_{X_1^T}(x_1^T)- \alpha)_+ \label{Eq: 1-TypeI error ub}
\end{align}
where \eqref{Eq: 1-TypeI error ub} is maximized at 
\begin{align}\label{Eq: P_X star}
    P_{X_1^T}^*\coloneqq \argmin_{P_{X_1^T}:\sD(P_{X_1^T}, Q_{X_1^T} )\leq \epsilon}\sum_{x_1^T}(P_{X_1^T}(x_1^T)- \alpha)_+.
\end{align}
For any $P_{X_1^T}$, the Type-II error is lower bounded by
\begin{align}
     \bbE_{P_{X_1^T,\zeta_1^T}}[\mathbbm{1}\{(X_1^T,\zeta_1^T)\notin \calA_1\}]&\geq \sum_{x_1^T}(P_{X_1^T}(x_1^T)- \alpha)_+.
\end{align}
By plugging $P_{X_1^T}^*$ into this lower bound, we obtain a Type-II lower bound that holds for all  $\gamma$ and $P_{X_1^T,\zeta_1^T}$. Recall that \citet{huang2023towards} proposed a type of detector and watermarking scheme that achieved this lower bound. As we demonstrate, it is actually the universal minimum Type-II error over all possible $\gamma$ and $P_{X_1^T,\zeta_1^T}$, denoted by $\beta_1^*(Q_{X_1^T},\epsilon,\alpha)$. 

Specifically, define $\epsilon^*(x_1^T)=Q_{X_1^T}(x_1^T)-P^*_{X_1^T}(x_1^T)$ 
and we have 
\begin{align}
    \sum_{x_1^T:P^*_{X_1^T}(x_1^T)\geq \alpha}\epsilon^*(x_1^T)&= \sum_{x_1^T:P^*_{X_1^T}(x_1^T)\geq \alpha, \epsilon^*(x_1^T) \geq 0}\epsilon^*(x_1^T)+\underbrace{\sum_{x_1^T:P^*_{X_1^T}(x_1^T)\geq \alpha, \epsilon^*(x_1^T) \leq 0}\epsilon^*(x_1^T)}_{\leq 0}\\
    &\leq \sum_{x_1^T:P^*_{X_1^T}(x_1^T)\geq \alpha, \epsilon^*(x_1^T) \geq 0}\epsilon^*(x_1^T)\\
    &=\sum_{x_1^T:P^*_{X_1^T}(x_1^T)\geq \alpha, Q_{X_1^T}(x_1^T)\geq P^*_{X_1^T}(x_1^T) }\epsilon^*(x_1^T)\\
    &\leq \sum_{x_1^T:Q_{X_1^T}(x_1^T)\geq P^*_{X_1^T}(x_1^T)}\epsilon^*(x_1^T)\leq\epsilon
\end{align}
where the last inequality follows from the total variation distance constraint $\sD_\TV(P_{X_1^T}, Q_{X_1^T} )\leq \epsilon$.
We rewrite $\beta_1^*(Q_{X_1^T},\epsilon,\alpha)$ as follows:
\begin{align}
     \beta_1^*(Q_{X_1^T},\epsilon,\alpha)&= \min_{P_{X_1^T}:\sD_{\TV}(P_{X_1^T}, Q_{X_1^T} )\leq \epsilon}\sum_{x_1^T}(P_{X_1^T}(x_1^T)- \alpha)_+ \label{Eq: Type-II LB 1}\\
     &= \sum_{x_1^T: P^*_{X_1^T}(x_1^T)\geq \alpha}(P^*_{X_1^T}(x_1^T)- \alpha),\\
     &=\sum_{x_1^T: P^*_{X_1^T}(x_1^T)\geq \alpha}(Q_{X_1^T}(x_1^T)- \epsilon^*(x_1^T)-\alpha)\\\\
     &=\sum_{x_1^T: P^*_{X_1^T}(x_1^T)\geq \alpha}(Q_{X_1^T}(x_1^T)-\alpha)- \sum_{x_1^T: P^*_{X_1^T}(x_1^T)\geq \alpha}\epsilon^*(x_1^T)\\
     &\geq  \sum_{x_1^T}(Q_{X_1^T}(x_1^T)- \alpha)_+-\epsilon, 
\end{align}
where the  last inequality follows from $\sum_{x_1^T:P^*_{X_1^T}(x_1^T)\geq \alpha} \epsilon^*(x_1^T)\leq \epsilon$, i.e. the total variation constraint limits how much the distribution $P_{X_1^T}^*$ can be perturbed from $Q_{X_1^T}$.
Since $\beta_1^*(Q_{X_1^T},\epsilon,\alpha) \geq 0$, finally we have
\begin{align}
    \beta_1^*(Q_{X_1^T},\epsilon,\alpha)&\geq \bigg(\sum_{x_1^T}(Q_{X_1^T}(x_1^T)- \alpha)_+-\epsilon\bigg)_+.
\end{align}
 Notably, the lower bound is achieved when $\{x_1^T: P^*_{X_1^T}(x_1^T)\geq \alpha\}=\{x_1^T: Q_{X_1^T}(x_1^T)\geq P^*_{X_1^T}(x_1^T)\}$ and $\sD_\TV(Q_{X_1^T},P^*_{X_1^T})=\epsilon$. That is, to construct $P^*_{X_1^T}$, an $\epsilon$ amount of the mass of $Q_{X_1^T}$ above $\alpha$ is moved to below $\alpha$, which is possible only when $\sum_{x_1^T}(\alpha-Q_{X_1^T}(x_1^T))_+\geq\epsilon$.  Note that  \citet[Theorem~3.2]{huang2023towards} points out a sufficient condition for this to hold: $|\calV|^T\geq \frac{1}{\alpha}$.
The optimal distribution $P^*_{X_1^T}$ thus satisfies
\begin{align}
    \sum_{x_1^T: Q_{X_1^T}(x_1^T)\geq \alpha}(Q_{X_1^T}(x_1^T)-P^*_{X_1^T}(x_1^T))=\sum_{x_1^T: Q_{X_1^T}(x_1^T)\leq \alpha}(P^*_{X_1^T}(x_1^T)-Q_{X_1^T}(x_1^T))=\epsilon.
\end{align}

\paragraph{Refined constraints for optimization.} We notice that the feasible region of \eqref{Eq: opt-I} can be further reduced as follows:
\begin{align}
  \min_{P_{X_1^T}}&\min_{ P_{   \zeta_1^T | X_1^T }} \quad \bbE_{ P_{X_1^T} P_{   \zeta_1^T | X_1^T }}[1-\gamma(X_1^T,\zeta_1^T)] \tag{Opt-II} \label{Eq:opt-II}\\
    \text{s.t.} 
   \quad & \int P_{ \zeta_1^T| X_1^T    }(  \zeta_1^T | x_1^T ) \; d \zeta_1^T = 1, \; \forall x_1^T \nn\\
   \quad & \int P_{ \zeta_1^T| X_1^T    }(  \zeta_1^T | x_1^T )\gamma(x_1^T,\zeta_1^T)  \leq 1\wedge \frac{\alpha}{P_{X_1^T}(x_1^T)}, \; \forall x_1^T \label{Eq: cond prob constraint} \\
   \quad & \sD_{\TV}(P_{X_1^T}, Q_{  X_1^T }) \leq \epsilon, \nn\\ 
\quad & \sup_{Q_{X_1^T}} \sum_{x_1^T}Q_{X_1^T}(x_1^T) \int \bigg(\sum_{y_1^T} P_{\zeta_1^T|X_1^T}(\zeta_1^T|y_1^T)P_{X_1^T}(y_1^T) \bigg)\gamma(x_1^T,\zeta_1^T) \; d \zeta_1^T \leq \alpha, \nn
\end{align}

where \eqref{Eq: cond prob constraint} is an additional constraint on $P_{ \zeta_1^T| X_1^T }$.
If and only if \eqref{Eq: cond prob constraint} can be achieved with equality, the minimum of the objective function $\bbE_{ P_{X_1^T} P_{   \zeta_1^T | X_1^T }}[1-\gamma(X_1^T,\zeta_1^T)]$ reaches  \eqref{Eq: Type-II LB}. 

\section{Formal Statement of Theorem \ref{Thm: optimal detector} and its Proof}\label{App: pf of Thm: optimal detector}
\newtheorem*{thm:opt dec}{Theorem~\ref{Thm: optimal detector} [Formal]}
\begin{thm:opt dec}[Optimal type of detectors and watermarking schemes] 
The set of all detectors that achieve the minimum Type-II error $\beta_1^*(Q_{X_1^T},\alpha, \epsilon)$
 in Theorem \ref{Thm: universal min typeII} for all text distribution $Q_{X_1^T}\in\calP(\calV^T)$ and distortion level $\epsilon \ge 0$ is precisely 
 \begin{equation}
    \begin{aligned}
        \Gamma^*\coloneqq \big\{\gamma\mid &\gamma(X_1^T,\zeta_1^T)= \mathbbm{1}\{X_1^T=g(\zeta_1^T)\},
        \text{ for some surjective } g:\calZ^T\to \cS \supset \cV^T \big\}.
    \end{aligned}
\end{equation}
For any valid function $g$, choose a redundant auxiliary value $\tilde{\zeta_1^T}\in\calZ^T$ such that $x_1^T\ne g(\tilde{\zeta_1^T})$ for all $x_1^T\in\calV^T$. The detailed construction of the optimal watermarking scheme is as follows:   
 \begin{equation}  P_{X_1^T}^*=\min_{P_{X_1^T}:\sD(P_{X_1^T}, Q_{X_1^T} )\leq \epsilon}\sum_{x_1^T}(P_{X_1^T}(x_1^T)- \alpha)_+, 
\end{equation}
\begin{align}\label{Eq: opt cond distr}
    &\text{and for any } x_1^T\in \calV^T, P_{\zeta_1^T|X_1^T}^*(\zeta_1^T|x_1^T) \text{ satisfies } \\
    &\begin{cases}
         P_{X_1^T}^*(x_1^T)\sum_{\zeta_1^T}   P_{\zeta_1^T|X_1^T}^*(\zeta_1^T|x_1^T) \gamma(x_1^T,\zeta_1^T) =P_{X_1^T}^*(x_1^T)\wedge \alpha, &\quad \forall \zeta_1^T \text{ s.t. } \gamma(x_1^T,\zeta_1^T)=1;\\
          P_{X_1^T}^*(x_1^T)   P_{\zeta_1^T|X_1^T}^*(\zeta_1^T|x_1^T) =\big(P_{X_1^T}^*(x_1^T)- \alpha \big)_+,\label{Eq: empty set prob} &\quad \text{if }  \zeta_1^T=\tilde{\zeta_1^T};\\
          P_{\zeta_1^T|X_1^T}^*(\zeta_1^T|x_1^T)=0, &\quad \text{otherwise.}
    \end{cases}
\end{align}
\end{thm:opt dec}

\begin{proof}
 First, we observe that the lower bound on the Type-II error in \eqref{Eq: Type-II LB} is attained if and only if the constraint in \eqref{Eq: cond prob constraint} holds with equality for all $x_1^T$ and for the optimizer. Thus, it suffices to show that for any detector $\gamma\notin\Gamma^*$, the constraint in \eqref{Eq: cond prob constraint} cannot hold with equality for all $x_1^T$ given any text distributions $Q_{X_1^T}$. First, define an arbitrary surjective function $g:\calZ^T\to\calS$, where $\calS$ is on the same metric space as $\calV^T$. Cases 1 and 2 prove that $\calV^T\subset \calS$. Case 3 proves that $\gamma$ can only be $\gamma(X_1^T,\zeta_1^T)= \mathbbm{1}\{X_1^T=g(\zeta_1^T)\}$.
\setlist[itemize]{leftmargin=*}
 \begin{itemize}
 \item \underline{\textbf{Case 1:} $\gamma(X_1^T,\zeta_1^T)= \mathbbm{1}\{X_1^T=g(\zeta_1^T)\}$ but $\calS\subset \calV^T$.} There exists $\tilx_1^T$ such that for all $\zeta_1^T$, $\mathbbm{1}\{\tilx_1^T=g(\zeta_1^T)\}=0$. 
 Under this case, \eqref{Eq: cond prob constraint} cannot hold with equality for $\tilx_1^T$ since the LHS is always $0$ while the RHS is positive.
 
 \item \underline{\textbf{Case 2:} $\gamma(X_1^T,\zeta_1^T)= \mathbbm{1}\{X_1^T=g(\zeta_1^T)\}$ but $\calS=\calV^T$.}    Let us start from the simple case where $T=1$, $\calV=\{x_1,x_2\}$, $\calZ=\{\zeta_1,\zeta_2\}$, and $g$ is an identity mapping. Given any $Q_{X}$ and any feasible $P_X$ such that $\sD_\TV(P_X,Q_X)\leq \epsilon$,  when \eqref{Eq: cond prob constraint} holds with equality, i.e.,
    \begin{align}
        P_{X,\zeta}(x_1,\zeta_1)=P_{X}(x_1)\wedge \alpha \quad \text{and}\quad  P_{X,\zeta}(x_2,\zeta_2)=P_{X}(x_2)\wedge \alpha, 
    \end{align}
    then the marginal $P_\zeta$ is given by: $P_\zeta(\zeta_1)=P_{X}(x_1)\wedge \alpha+(P_{X}(x_2)- \alpha)_+$, $P_\zeta(\zeta_2)=P_{X}(x_2)\wedge \alpha+(P_{X}(x_1)- \alpha)_+$. The worst-case Type-I error is given by  
    \begin{align}
        &\sup_{Q_X} \Big(Q_X(x_1)\big(P_{X}(x_1)\wedge \alpha+(P_{X}(x_2)- \alpha)_+ \big)+ Q_X(x_2)\big(P_{X}(x_2)\wedge \alpha+(P_{X}(x_1)- \alpha)_+\big)\Big)  \\
        &\geq P_{X}(x_1)\wedge \alpha+(P_{X}(x_2)- \alpha)_+\\
        &>\alpha, \quad \text{ if } P_{X}(x_1)>\alpha, P_{X}(x_2)>\alpha.
    \end{align}
    It implies that for any $Q_X$ such that $\{P_X\in\calP(\calV): \sD_\TV(P_X,Q_X)\leq \epsilon\}\subseteq \{P_X\in\calP(\calV): P_{X}(x_1)>\alpha, P_{X}(x_2)>\alpha\}$, the false-alarm constraint is violated  when \eqref{Eq: cond prob constraint} holds with equality. It can be verified that this result also holds for larger $(T,\calV,\calZ)$ and other functions $g:\calZ^T\to \calV^T$.


\item \underline{\textbf{Case 3:} Let $\Xi_\gamma(x_1^T)\coloneqq \{\zeta_1^T\in\calZ^T: \gamma(x_1^T,\zeta_1^T)=1\}$. $\exists x_1^T\ne y_1^T \in \calV^T, \text{ s.t. } \Xi(x_1^T)\cap \Xi(y_1^T)\ne\emptyset$.} For any detector $\gamma\notin \Gamma^*$ that does not fall into Cases 1 and 2, it falls into Case 3. 
 Let us start from the simple case where $T=1$, $\calV=\{x_1,x_2\}$, $\calZ=\{\zeta_1,\zeta_2,\zeta_3\}$.  
    Consider a detector $\gamma$ as follows: $\gamma(x_1,\zeta_1)=\gamma(x_2,\zeta_1)=1$ and $\gamma(x,\zeta)=0$ for all other pairs $(x,\zeta)\in\calV\times \calZ$. Hence,  $\Xi(x_1)\cap \Xi(x_2)=\{\zeta_1\}$. When \eqref{Eq: cond prob constraint} holds with equality, i.e.,
    \begin{align}
        P_{X,\zeta}(x_1,\zeta_1)=P_{X}(x_1)\wedge \alpha \quad \text{and}\quad  P_{X,\zeta}(x_2,\zeta_1)=P_{X}(x_2)\wedge \alpha, 
    \end{align}
    we have the worst-case Type-I error lower bounded by
    \begin{align}
        \sup_{Q_X} \bigg(Q_X(x_1)P_\zeta(\zeta_1)+ Q_X(x_2)P_\zeta(\zeta_1)\bigg) &= P_\zeta(\zeta_1) = P_{X}(x_1)\wedge \alpha+ P_{X}(x_2)\wedge \alpha\\
        &>\alpha, \quad \text{ if } P_{X}(x_1)>\alpha \text{ or } P_{X}(x_2)>\alpha.
    \end{align}
     Thus, for any $Q_X$ such that $\{P_X\in\calP(\calV): \sD_\TV(P_X,Q_X)\leq \epsilon\}\subseteq \{P_X\in\calP(\calV): P_{X}(x_1)>\alpha \text{ or } P_{X}(x_2)>\alpha\}$, the false-alarm constraint is violated  when \eqref{Eq: cond prob constraint} holds with equality.

    If we consider a detector $\gamma$ as follows: $\gamma(x_1,\zeta_1)=\gamma(x_2,\zeta_1)=\gamma(x_2,\zeta_2)=1$ and $\gamma(x,\zeta)=0$ for all other pairs $(x,\zeta)\in\calV\times \calZ$. We still have $\Xi(x_1)\cap \Xi(x_2)=\{\zeta_1\}$. 
    When \eqref{Eq: cond prob constraint} holds with equality, i.e.,
    \begin{align}
        P_{X,\zeta}(x_1,\zeta_1)=P_{X}(x_1)\wedge \alpha \quad \text{and}\quad  P_{X,\zeta}(x_2,\zeta_1)+P_{X,\zeta}(x_2,\zeta_2)=P_{X}(x_2)\wedge \alpha, 
    \end{align}
    we have the worst-case Type-I error lower bounded by
    \begin{align}
        &\sup_{Q_X} \bigg(Q_X(x_1)P_\zeta(\zeta_1)+ Q_X(x_2)(P_\zeta(\zeta_1)+P_\zeta(\zeta_2))\bigg) =\sup_{Q_X} \bigg(P_\zeta(\zeta_1)+ Q_X(x_2)P_\zeta(\zeta_2)\bigg)\\
        &=P_\zeta(\zeta_1)+P_\zeta(\zeta_2)=P_{X}(x_1)\wedge \alpha+P_{X}(x_2)\wedge \alpha>\alpha, \quad \text{ if } P_{X}(x_1)>\alpha \text{ or } P_{X}(x_2)>\alpha,
    \end{align}
    which is the same as the previous result.

    If we let $\calV=\{x_1,x_2,x_3\}$, $\calZ=\{\zeta_1,\zeta_2,\zeta_3,\zeta_4\}$ and $\gamma(x_3,\zeta_3)=1$ in addition to the aforementioned $\gamma$, we can similarly show that the worst-case Type-I error is larger than $\alpha$ for some distributions $Q_X$. 
    
    Therefore, it can be observed that as long as $ \Xi(x_1^T)\cap \Xi(y_1^T)\ne\emptyset$ for some $x_1^T\ne y_1^T \in \calV^T$, \eqref{Eq: cond prob constraint} can not be achieved with equality for all $Q_{X_1^T}$ and $\epsilon$ even for larger $(T,\calV,\calZ)$ as well as continuous $\calZ$. 
 \end{itemize}
In conclusion, for any detector $\gamma\notin \Gamma^*$, the universal minimum Type-II error in \eqref{Eq: Type-II LB} cannot be obtained for all $Q_{X_1^T}$ and $\epsilon$.

Since the optimal detector takes the form  $\gamma(X_1^T,\zeta_1^T)= \mathbbm{1}\{X_1^T=g(\zeta_1^T)\}$ for some surjective function $g:\calZ^T \to \calS$, $\calS\supset\calV^T$, and the token vocabulary is discrete, it suffices to consider discrete $\calZ$ to derive the optimal watermarking scheme.

Under the watermarking scheme $P_{X_1^T,\zeta_1^T}^*$ (cf.~\eqref{Eq: P_X star} and \eqref{Eq: opt cond distr}), the Type-I and Type-II errors are given by:

\textbf{Type-I error:}
\begin{align}
     & \forall y_1^T\in\calV^T, \quad  \bbE_{P_{\zeta_1^T}^*}[\mathbbm{1}\{y_1^T=g(\zeta_1^T)\}]=\sum_{\zeta_1^T}P_{\zeta_1^T}^*(\zeta_1^T)\mathbbm{1}\{y_1^T=g(\zeta_1^T)\}\\
    &=\sum_{\zeta_1^T}\sum_{x_1^T}P_{X_1^T,\zeta_1^T}^*(x_1^T,\zeta_1^T)\mathbbm{1}\{y_1^T=g(\zeta_1^T)\}\\
    &=P_{X_1^T}^*(y_1^T)\sum_{\zeta_1^T}   P_{\zeta_1^T|X_1^T}^*(\zeta_1^T|y_1^T)\mathbbm{1}\{y_1^T=g(\zeta_1^T)\} =P_{X_1^T}^*(y_1^T)\wedge \alpha \\
    &\leq \alpha,\\
    & \!\!\!\! \text{and since any distribution $Q_{X_1^T}$ can be written as a linear combinations of $\delta_{y_1^T}$, we have} \nn\\
&\max_{Q_{X_1^T}}\bbE_{Q_{X_1^T}P_{\zeta_1^T}^*}[\mathbbm{1}\{X_1^T=g(\zeta_1^T)\}]\leq \alpha.
\end{align}
\textbf{Type-II error:}
\begin{align}
     \quad &1-\bbE_{P^*_{X_1^T,\zeta_1^T}}[\mathbbm{1}\{X_1^T=g(\zeta_1^T)\}] \nn\\
    &=1-\sum_{x_1^T}\sum_{\zeta_1^T}P_{X_1^T,\zeta_1^T}^*(x_1^T,\zeta_1^T) \mathbbm{1}\{x_1^T=g(\zeta_1^T)\}\\
    &=1-\sum_{x_1^T}P_{X_1^T}^*(x_1^T)\sum_{\zeta_1^T}   P_{\zeta_1^T|X_1^T}^*(\zeta_1^T|x_1^T)\mathbbm{1}\{x_1^T=g(\zeta_1^T)\} \\
    &=1-\sum_{x_1^T}\big(P_{X_1^T}^*(x_1^T)\wedge \alpha \big)\\
    &=\sum_{x_1^T:P_{X_1^T}^*(x_1^T)>\alpha}(P_{X_1^T}^*(x_1^T)- \alpha).
\end{align}
The optimality of $P_{X_1^T,\zeta_1^T}^*$ is thus proved. We note that \eqref{Eq: cond prob constraint} in \eqref{Eq:opt-II} holds with equality under this optimal conditional distribution $P_{\zeta_1^T|X_1^T}^*$. 

Compared to \citet[Theorem 3.2]{huang2023towards}, their proposed detector is equivalent to $\gamma(X_1^T,\zeta_1^T)= \mathbbm{1}\{X_1^T=\zeta_1^T\}$, where $\calZ^T=\calV^T\cup\{\tilde{\zeta}_1^T\}$ and $\tilde{\zeta}_1^T\notin \calV^T$, meaning that it belongs to $\Gamma^*$. 
\end{proof}

\section{Illustration of  Construction of the Optimal Watermarking Scheme}\label{App: toy example of opt watermark}
Using a toy example in Figure~\ref{Fig: watermarking}, we now illustrate how to construct the optimal watermarking schemes, where 
\begin{equation}
P_{X_1^T}^*=\argmin_{P_{X_1^T}:\sD(P_{X_1^T}, Q_{X_1^T} )\leq \epsilon}\sum_{x_1^T}(P_{X_1^T}(x_1^T)- \alpha)_+. 
\end{equation}

\begin{figure}[ht]
    \centering
    \includegraphics[trim={0 .3cm .6cm .4cm},  width=.55\linewidth]{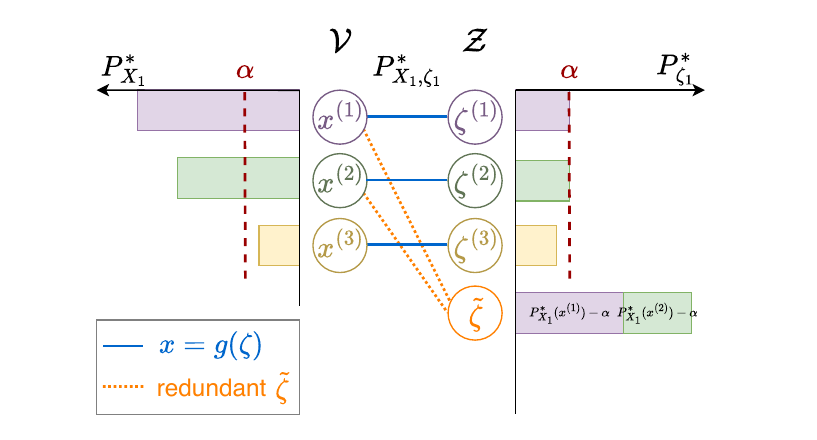} \caption{A toy example of the optimal detector and watermarking scheme when $T=1$. Links between $\calV$ and $\calZ$ suggest $P_{X_1,\zeta_1}^*> 0$. }
    \label{Fig: watermarking}
\end{figure}
Constructing the optimal watermarking scheme $P_{X_1^T,\zeta_1^T}^*$ is equivalent to transporting the probability mass $P_{X_1^T}^*$ on $\calV$ to $\calZ$, maximizing $P_{X_1^T,\zeta_1^T}^*(x_1^T,\zeta_1^T)$ when $x_1^T=g(\zeta_1^T)$, while keeping the worst-case Type-I error below $\alpha$.  Without loss of generality, by letting $T=1$, we present Figure \ref{Fig: watermarking} to visualize the optimal watermarking scheme. The construction process is given step by step as follows:

\vspace{-.5em}
\textbf{-- Identify text-auxiliary pairs:} We begin by identifying text-auxiliary pairs $(x,\zeta)\in\calV\times \calZ$ with $\gamma(x,\zeta)=\mathbbm{1}\{x=g(\zeta)\}=1$ and connect them by blue solid lines.

\vspace{-.5em}
\textbf{-- Introducing redundant auxiliary value:} We enlarge $\calZ$ to include an additional value $\tilde{\zeta}$ and set $\gamma(x, \tilde{\zeta})=0$ for all $x$. We will call $\tilde{\zeta}$ ``redundant".

\vspace{-.5em}
\textbf{-- Mass allocation for $P_{X_1}^*(x)>\alpha$:} If $P_{X_1}^*(x)>\alpha$, we transfer $\alpha$ mass of $P_{X_1}^*(x)$ to the $\zeta$ connected by the blue solid lines. The excess mass is transferred to the redundant $\tilde{\zeta}$ (orange dashed lines).
Specifically, 
for $x^{(1)}$, where $P_{X_1}^*(x^{(1)})>\alpha$ and $x^{(1)}=g(\zeta^{(1)})$, we move $\alpha$ units of mass from  $P_{X_1}^*(x^{(1)})$ to $P_{\zeta_1}^*(\zeta^{(1)})$, ensuring that $P_{\zeta_1}^*(\zeta^{(1)})=\alpha$. The rest $(P_{X_1}^*(x^{(1)})-\alpha)$ units of mass is moved to $\tilde{\zeta}$. Similarly, for $x^{(2)}$, where $P_{X_1}^*(x^{(2)})>\alpha$ and $x^{(2)}=g(\zeta^{(2)})$, we move $\alpha$ mass from $P_{X_1}^*(x^{(2)})$ to $P_{\zeta_1}^*(\zeta^{(2)})$ and $(P_{X_1}^*(x^{(2)})-\alpha)$ mass to $\tilde{\zeta}$. Consequently, the probability of $\tilde{\zeta}$ is $P_{\zeta_1}(\tilde{\zeta})=(P_{X_1}^*(x^{(1)})-\alpha)+(P_{X_1}^*(x^{(2)})-\alpha)$. In this way, there is a chance for the lower-entropy texts $x^{(1)}$ and $x^{(2)}$ to be mapped to the redundant $\tilde{\zeta}$ during watermark generation.

\vspace{-.5em}
\textbf{-- Mass allocation for $P_{X_1}^*(x)<\alpha$:}
For $x^{(3)}$, where $P_{X_1}^*(x^{(3)})<\alpha$ and $x^{(3)}=g(\zeta^{(3)})$, we move the entire mass $P_{X_1}^*(x^{(3)})$ to $P_{\zeta_1}^*(\zeta^{(3)})$ along the blue solid line. It means that higher-entropy texts will not be mapped to the redundant $\tilde{\zeta}$ during watermark generation.

\vspace{-.5em}
\textbf{-- Outcome:}
    This construction ensures that $P_{\zeta_1}^*(\zeta)\leq \alpha$ for all $\zeta\in\{\zeta^{(1)},\zeta^{(2)},\zeta^{(3)}$, keeping the worst-case Type-I error under control. The Type-II error is equal to $P_{\zeta_1}^*(\tilde{\zeta})$, which is exactly the universally minimum Type-II error. This scheme can be similarly generalized to $T>1$.

In Figure \ref{Fig: watermarking}, when there is no link between $(x,\zeta)\in\calV\times \calZ$, the joint probability $P_{X_1,\zeta_1}^*(x,\zeta)=0$.
By letting $\epsilon=0$, the scheme guarantees that the watermarked LLM remains unbiased (distortion-free). Note that the detector proposed in \citet[Theorem 3.2]{huang2023towards} is also included in our framework, see Appendix~\ref{App: pf of Thm: optimal detector}.


\section{Construction of Token-level Optimal Watermarking Scheme}\label{App: pf of Lem: token opt watermark}
The toke-level optimal watermarking scheme is the optimal solution to the following optimization problem:
\begin{align}
    &\inf_{P_{X_t,\zeta_t|X_1^{t-1},\zeta_1^{t-1}}} \bbE_{P_{X_t,\zeta_t|X_1^{t-1},\zeta_1^{t-1}}}[1-\mathbbm{1}\{X_t=g_\mathrm{tk}(\zeta_t)\}] \nn\\
    & \qquad \quad \mathrm{s.t.}   \sup_{Q_{X_t|X_1^{t-1}}}\bbE_{Q_{X_t|X_1^{t-1}}\otimes P_{\zeta_t|\zeta_1^{t-1}}}[\mathbbm{1}\{X_t=g_\mathrm{tk}(\zeta_t)\}]\leq \eta, \; \sD_\TV(P_{X_t|X_1^{t-1}}, Q_{X_t|X_1^{t-1}})\leq \epsilon.  \nn
\end{align}
The optimal solution $P_{X_t,\zeta_t|X_1^{t-1},\zeta_1^{t-1}}^*$ follows the similar rule as that of $P_{X_1^T,\zeta_1^T}^*$ in Theorem \ref{Thm: optimal detector} with $(Q_{X_1^T},P_{X_1^T},\alpha)$ replaced by $(Q_{X_t|X_1^{t-1}},P_{X_t|X_1^{t-1}},\eta)$. We refer readers to Appendix \ref{App: pf of Thm: optimal detector} for further details.

\section{Formal Statement of Lemma \ref{Lem: token level errors} and its Proof}\label{App: pf of Lem: token level errors}
Let $P_{X_1^T,\zeta_1^T}^{\mathrm{token}*}$ and $P_{\zeta_1^T}^{\mathrm{token}*}$ denote the joint distributions induced by the token-level optimal watermarking scheme.
\newtheorem*{lem:formal}{Lemma \ref{Lem: token level errors} (Formal)}
\begin{lem:formal}[Token-level optimal watermarking detection errors]
    Let $\eta=(\alpha/\binom{T}{\lceil T\lambda \rceil})^{\frac{1}{\lceil T\lambda \rceil}}$. Under the detector $\gamma$ in \eqref{Eq: token level detector} and the token-level optimal  watermarking scheme $P_{X_t,\zeta_t|X_1^{t-1},\zeta_1^{t-1}}^*$,  the Type-I error is upper bounded by 
    \begin{equation}
        \sup_{Q_{X_1^T}}\beta_0(\gamma,Q_{X_1^T},P_{\zeta_1^T}^{\mathrm{token}*})\leq \alpha.
    \end{equation}
    Assume that when $T$ and $n\leq T$ are both large enough,  token $X_t$ is independent of  $X_{t-i}$, i.e., $P_{X_t,X_{t-i}}=P_{X_t}\otimes P_{X_{t-i}}$, for all $i\geq n+1$ and $t\in[T]$. 
     Let $\calI_{T,n}(i)=([i-n, i+n]\cap [T])\backslash \{i\}$.
    By setting the detector threshold as $\lambda=\frac{a}{T}\sum_{t=1}^T\bbE_{X_t,\zeta_t}[\mathbbm{1}\{X_t=g(\zeta_t)\}]$ for some $a\in[0,1]$, the Type-II error exponent is
    \[-\log \beta_1(\gamma,P_{X_1^T,\zeta_1^T}^{\mathrm{token}*})=\Omega\bigg(\frac{T}{n} \bigg).\]

\end{lem:formal}

The following is the proof of Lemma \ref{Lem: token level errors}.

To choose $\lceil T\lambda \rceil$ indices out of $\{1,\ldots,T\}$, there are $\binom{T}{\lceil T\lambda \rceil}$ choices. Let $k=1,\ldots, \binom{T}{\lceil T\lambda \rceil}$ and $S_k$ be the $k$-th set of the chosen indices. The Type-I error is upper-bounded by
\begin{align}
    \beta_0(\gamma,Q_{X^{(T)}},P_{\zeta_1^T}^{\mathrm{token}*})&=\Pr\bigg(\frac{1}{T}\sum_{t=1}^T\mathbbm{1}\{X_t=g(\zeta_t)\} \geq \lambda \mid \rmH_0 \bigg)\\
    &\leq \Pr\bigg(\bigcup_{k=1}^{\binom{T}{\lceil T\lambda \rceil} }\{\mathbbm{1}\{X_t=g(\zeta_t)\}=1,\forall t\in S_k \} \mid \rmH_0 \bigg)\\
    &\leq \sum_{k=1}^{\binom{T}{\lceil T\lambda \rceil} } \underbrace{\Pr\bigg(\{\mathbbm{1}\{X_t=g(\zeta_t)\}=1,\forall t\in S_k \} \mid \rmH_0 \bigg)}_{P_{\text{FA},k}}.
\end{align}
Without loss of generality, let $m=\lceil T\lambda\rceil $ and $S_k=\{1,2,\ldots, m \}$. We can rewrite $P_{\text{FA},k}$ as
\begin{align}
    P_{\text{FA},k}&=\bbE_{Q_{X^{(T)}}\otimes P_{\zeta^{(T)}}}[\{\mathbbm{1}\{X_t=g(\zeta_t)\}=1,\forall t\in S_k \}]\\
    &=\bbE_{Q_{X^{(T)}}\otimes P_{\zeta^{(T)}}}[\prod_{t\in S_k}\mathbbm{1}\{X_t=g(\zeta_t)\}]\\
    &=\bbE_{Q_{X_1}\otimes P_{\zeta_1}}\Bigg[\mathbbm{1}\{X_1=g(\zeta_1)\} \bbE_{Q_{X_2|X_1}\otimes P_{\zeta_2|\zeta_1}}\bigg[\mathbbm{1}\{X_2=g(\zeta_2)\}\cdots \\
    &\qquad\qquad\qquad\qquad\qquad\qquad \cdots \bbE_{Q_{X_m|X_1^{m-1}}\otimes P_{\zeta_m|\zeta_1^{m-1}}}[\mathbbm{1}\{X_m=g(\zeta_m)\}]\big]\cdots \bigg] \Bigg]\\
    &\leq \eta^m, \quad \forall Q_{X_1^T}.
\end{align}
Then the Type-I error is finally upper-bounded by
\begin{align}
    \sup_{Q_{X_1^T}}\beta_0(\gamma,Q_{X_1^T},P_{\zeta_1^T}^{\mathrm{token}*})\leq \binom{T}{\lceil T\lambda \rceil}\eta^{\lceil T\lambda \rceil}\leq\alpha.
\end{align}

We prove the Type-II error bound by applying
\citet[Theorem 10]{janson1998new}.

\begin{theorem}[Theorem 10, {\citet{janson1998new}}]
    Let $\{I_i\}_{i\in\calI}$ be a finite family of indicator random variables, defined on a common probability space. Let $G$ be a dependency graph of $\calI$, i.e., a graph with vertex set $\calI$ such that if $A$ and $B$ are disjoint subsets of  $\calI$, and $\Gamma$ contains no edge between $A$ and $B$, then  $\{I_i\}_{i\in A}$ and $\{I_i\}_{i\in B}$ are independent. We write $i\sim j$ if $i,j\in\calI$ and $(i,j)$ is an edge in $G$. In particular, $i\not\sim i$. Let $S=\sum_{i\in\calI}I_i$ and $\Delta=\bbE[S]$. Let $\Psi=\max_{i\in\calI} \sum_{j\in\calI, j\sim i}\bbE[I_j]$ and $\Phi=\frac{1}{2}\sum_{i\in\calI}\sum_{j\in\calI,j\sim i}\bbE[I_i I_j]$. For any $0\leq a \leq 1$,
    \begin{align}
        \Pr(S\leq a\Delta) \leq \exp\bigg\{-\min\bigg\{(1-a)^2\frac{\Delta^2}{8\Phi+2\Delta},(1-a)\frac{\Delta}{6\Psi} \bigg\}\bigg\}.\label{Eq: tail inequality}
    \end{align}
\end{theorem}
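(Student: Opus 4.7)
The plan is a Chernoff-style argument applied to the lower tail. Fix $t>0$ (to be optimized later) and write
\begin{equation}
\Pr(S\leq a\Delta) \;=\; \Pr\bigl(e^{-tS}\geq e^{-ta\Delta}\bigr) \;\leq\; e^{t a\Delta}\,\bbE[e^{-tS}].
\end{equation}
The crux is to bound $\bbE[e^{-tS}]$ in terms of the graph-theoretic statistics $\Delta$, $\Phi$ and $\Psi$; once this is done, the two terms in the $\min$ will arise from two regimes of the optimal $t$.

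\textbf{Bounding the exponential moment.} Since each $I_i\in\{0,1\}$, the identity $e^{-tI_i} = 1-(1-e^{-t})I_i$ holds, so, writing $p=1-e^{-t}\in(0,1)$, one has $e^{-tS}=\prod_{i\in\calI}(1-pI_i)$. If the $I_i$ were independent, the expectation would factor as $\prod_i(1-p\bbE[I_i])\leq e^{-p\Delta}$. For dependent indicators, I would proceed as in Janson (1998) by ordering $\calI$ as $1,\ldots,N$ and telescoping: peel off $(1-pI_1)$ conditioned on $\{I_j\}_{j\ne 1}$, exploit that $I_1$ is independent of $\{I_j : j\not\sim 1\}$, and absorb the "error" produced by conditioning on neighbors $\{I_j:j\sim 1\}$ into second- and higher-order correction terms. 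Iterating this peel-and-correct step and collecting contributions leads to an inequality of the form
\begin{equation}
\bbE[e^{-tS}] \;\leq\; \exp\!\Bigl(-p\Delta \;+\; p^2\,\Phi \;+\; R(p,\Psi,\Delta)\Bigr),
\end{equation}
where $\Phi$ records exactly the pairwise dependency contributions $\tfrac12\sum_{i\sim j}\bbE[I_iI_j]$, and the remainder $R$ is controlled by $\Psi$ because each local "clique of dependents" around vertex $i$ has expected size at most $\Psi$; restricting $p$ so that $p\Psi\leq \text{const}$ makes $R$ no larger than, say, a fixed multiple of $p\Delta$ (after which it can be folded into the exponent with the constants $8$ and $2$ in the denominator).

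\textbf{Two regimes and optimization.} Combining with Markov, $\log\Pr(S\leq a\Delta)\leq -(1-a)p\Delta + p^2\Phi + R$ (using $tp\geq p^2$ for $p$ small to replace $t$ by $p$ up to constants). Optimizing the quadratic in $p$, the unconstrained minimizer is $p^\star=(1-a)\Delta/(2\Phi)$, which gives
\begin{equation}
\Pr(S\leq a\Delta)\;\leq\;\exp\!\Bigl(-\tfrac{(1-a)^2\Delta^2}{8\Phi + 2\Delta}\Bigr),
\end{equation}
provided $p^\star$ satisfies the range constraint $p^\star\Psi\leq \tfrac{1}{3}$ (say), which keeps $R$ harmless. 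When instead $p^\star\Psi>\tfrac13$, the correct choice is $p=\Theta(1/\Psi)$, which pushes the analysis into the linear regime and yields $\exp(-(1-a)\Delta/(6\Psi))$. Taking the minimum of the two regimes produces exactly the stated bound \eqref{Eq: tail inequality}.

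\textbf{Main obstacle.} The heart of the argument, and the step where care is most needed, is the telescoping estimate on $\bbE\!\left[\prod_i(1-pI_i)\right]$: one must show that the only corrections beyond the independent product are genuinely governed by neighbor-edges in $G$, so that $\Phi$ captures the quadratic correction exactly and $\Psi$ controls the cubic and higher-order terms uniformly in $N=|\calI|$. Bookkeeping these local contributions without accidentally double-counting or losing the dimension-free constants is the delicate ingredient; everything after that is a straightforward Chernoff optimization and case split.
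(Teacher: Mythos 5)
The statement you are proving is cited in the paper as Theorem~10 of \citet{janson1998new}; the paper reproduces it verbatim and uses it as a black box, so there is no in-paper proof to compare against. Evaluating your proposal on its own terms: you have the right high-level shape (Markov applied to $e^{-tS}$, reparameterize $p=1-e^{-t}$, exploit $e^{-tI_i}=1-pI_i$ for indicators, then a quadratic-versus-linear case split), and this is indeed the family of techniques Janson uses. However, the proposal is a roadmap rather than a proof, and the single step that carries essentially all the weight is left entirely unestablished. Specifically, you assert the inequality $\bbE[e^{-tS}] \leq \exp(-p\Delta + p^2\Phi + R(p,\Psi,\Delta))$ with a remainder $R$ that you claim is ``harmless'' when $p\Psi$ is bounded, but the peel-and-correct telescoping is never carried out, the form of $R$ is never derived, and the claim that $R$ can be folded into the denominators $8\Phi+2\Delta$ and $6\Psi$ is pure hand-waving. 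The problem is that this is exactly the delicate part: for a general dependency graph there is no clean product factorization to telescope, and the naive conditioning-on-neighbors scheme does not obviously close without either a monotonicity/correlation assumption (as in the original Janson inequality for decreasing events) or a careful Suen-style ordering argument with an induction over vertices. You also silently replace $t$ by $p$ ``for $p$ small'' when passing to the Chernoff exponent $t a\Delta$, but whether this loss is absorbed by the stated constants is never checked. As written, the argument would not produce the exact constants $8$, $2$, $6$, nor even establish that the two-regime bound holds; it reports where the difficulty lies without resolving it, which you yourself flag in the final paragraph. To turn this into a proof you would need to (i) write down and prove the concrete moment-generating-function bound with explicit $\Phi$- and $\Psi$-dependent terms (this requires either Suen's ordering lemma or the martingale difference bound that Janson uses), and (ii) carry out the optimization over $p$ with the actual remainder, tracking constants.
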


Given any detector $\gamma$ that accepts the form in \eqref{Eq: token level detector} and the corresponding optimal watermarking scheme, for some $a\in(0,1)$,
we first set the threshold in $\gamma$ as
\begin{align}
    T\lambda=a\sum_{t=1}^T\bbE_{X_t,\zeta_t}[\mathbbm{1}\{X_t=g(\zeta_t)\}]=a\sum_{t=1}^T\bbE_{X_1^{t-1}}\bigg[\sum_{x}\big(P^*_{X_t|X_1^{t-1}}(x|X_1^{t-1})-\eta \big)_+ \bigg]=: a \Delta_T,
\end{align}
where $P^*_{X_t|X_1^{t-1}}$ is induced by $P^*_{X_t,\zeta_t|X_1^{t-1},\zeta_1^{t-1}}$.
The Type-II error is given by
\begin{align}
    \beta_1(\gamma,P_{X_1^T,\zeta_1^T}^{\mathrm{token}*})=P_{X_1^T,\zeta_1^T}^{\mathrm{token}*}\bigg(\sum_{t=1}^T\mathbbm{1}\{X_t=g(\zeta_t)\} < a \Delta_T \bigg)
\end{align}
which is exactly the left-hand side of \eqref{Eq: tail inequality}. 

Assume that when $T$ and $n\leq T$ are large enough,  token $X_t$ is independent of all $X_{t-i}$ for all $i\geq n+1$ and $t\in[T]$, i.e., $P_{X_t,X_{t-i}}=P_{X_t}\otimes P_{X_{t-i}}$. Let $\calI_{T,n}(i)=([i-n, i+n]\cap [T])\backslash \{i\}$.
The $\Psi$ and $\Phi$ on the right-hand side of \eqref{Eq: tail inequality} are given by:
\begin{align}
    \Psi&\coloneqq\max_{i\in[T]}\sum_{t\in[T],t\sim i}\bbE_{X_t,\zeta_t}[\mathbbm{1}\{X_t=g(\zeta_t)\}]=\max_{i\in[T]}\sum_{t\in\calI_{T,n}(i)}\bbE_{X_t,\zeta_t}[\mathbbm{1}\{X_t=g(\zeta_t)\}]=\Theta(n),
\end{align}
\begin{align}
    \Phi&\coloneqq \frac{1}{2}\sum_{i\in[T]}\sum_{j\in[T],j\sim i}\bbE[\mathbbm{1}\{X_i=g(\zeta_i)\} \mathbbm{1}\{X_j=g(\zeta_j)\}]\\
    &=\frac{1}{2}\sum_{i\in[T]}\sum_{j\in\calI_{T,n}(i)}\bbE[\mathbbm{1}\{X_i=g(\zeta_i)\} \mathbbm{1}\{X_j=g(\zeta_j)\}]=\Theta(Tn).
\end{align}
By plugging $\Delta_T$, $\Omega$ and $\Theta$ back into the right-hand side of \eqref{Eq: tail inequality}, we have the upper bound  
\begin{align}
        \beta_1(\gamma,P_{X_1^T,\zeta_1^T}^{\mathrm{token}*})\leq \exp\bigg\{-\min\bigg\{(1-a)^2\frac{\Delta_T^2}{8\Phi+2\Delta_T},(1-a)\frac{\Delta_T}{6\Psi} \bigg\}\bigg\} \label{Eq: token type-ii error ub}
    \end{align}
    where $U_t=\bbE_{X_1^{t-1}}\big[\sum_{x}\big(P^*_{X_t|X_1^{t-1}}(x|X_1^{t-1})-\eta \big)_+ \big]$, $\Delta_T\coloneqq \sum_{t=1}^TU_t$, $\Psi=\max_{i\in[T]}\sum_{t\in\calI_{T,n}(i)}U_t$, and $\Phi=\frac{1}{2}\sum_{i\in[T]}\sum_{j\in\calI_{T,n}(i)}\bbE[\mathbbm{1}\{X_i=g(\zeta_i)\} \mathbbm{1}\{X_j=g(\zeta_j)\}]$.
This implies
\begin{align}
    &-\log \beta_1(\gamma,P_{X_1^T,\zeta_1^T}^{\mathrm{token}*})\geq \min\bigg\{(1-a)^2 \Theta\bigg(\frac{T}{n}\bigg),(1-a)\Theta\bigg(\frac{T}{n}\bigg) \bigg\}\\
    \Longrightarrow &-\log \beta_1(\gamma,P_{X_1^T,\zeta_1^T}^{\mathrm{token}*})=\Omega\bigg(\frac{T}{n}\bigg).
\end{align}

\section{DAWA  Pseudo-Codes}\label{App: pseudo codes}

\scalebox{0.9}{
\begin{minipage}{1.05\linewidth}
\begin{algorithm}[H]
    \caption{Watermarked Text Generation}
    \label{algorithm1: generation}
    \begin{algorithmic}[1]
        \REQUIRE LLM $Q$, Vocabulary $\mathcal{V}$, Prompt $u$, Secret $\key$, Token-level false alarm $\eta$.
        \STATE $\calZ=\{h_{\key}(x)\}_{x\in\calV}\cup \{\tilde{\zeta}\}$ 
        \vspace{2pt}
        \FOR {$t=1,...,T$ } 
            \vspace{1pt}
            \STATE  $P_{\zeta_t|x_1^{t-1},u}(\zeta)\leftarrow (Q_{X_t|x_1^{t-1},u}(h^{-1}_{\key}(\zeta))\wedge \eta), ~\forall {\zeta\in\calZ\backslash\{\tilde{\zeta}\}}$.
            \STATE $P_{\zeta_t|x_1^{t-1},u}(\tilde{\zeta})\leftarrow \sum_{x\in\calV}(Q_{X_t|x_1^{t-1},u}(x)-\eta)_+$.
            \STATE Compute a hash of tokens $x_{t-n}^{t-1}$ with $\key$, and use it as a seed to  generate $(G_{t,\zeta})_{\zeta\in\calZ}$ from Gumbel distribution.
            \STATE $\zeta_t \leftarrow \argmax _{\zeta\in\calZ}\log(P_{\zeta_t|x_1^{t-1},u}(\zeta))+G_{t,\zeta}$.
            \IF{$\zeta_t \neq \tilde{\zeta}$}
                \STATE $x_t \leftarrow h_{\key}^{-1}(\zeta_t)$
            \ELSE
                \STATE Sample $x_t \sim \bigg(\frac{(Q_{X_t|x_1^{t-1},u}(x)-\eta)_+}{\sum_{x\in\calV}\big(Q_{X_t|x_1^{t-1},u}(x)-\eta\big)_{+}} \bigg)_{x\in \calV}$
                \vspace{-2pt}
            \ENDIF
            \vspace{1pt}
        \ENDFOR
        \ENSURE Watermarked text $x_1^T=(x_1, ..., x_T)$.
        \vspace{-0.7pt}
    \end{algorithmic}
\end{algorithm}
\end{minipage}
}

\scalebox{0.9}{
\begin{minipage}{1.05\linewidth}
\begin{algorithm}[H]
    \caption{Watermarked Text Detection}
    \label{algorithm1: detection}
    \begin{algorithmic}[1]
        \REQUIRE SLM $\tilQ$, Vocabulary $\mathcal{V}$, Text $x_1^T$, Secret $\key$, Token-level false alarm $\eta$, Threshold $\lambda$.
        \STATE score $=0$, \; $\calZ=\{h_{\key}(x)\}_{x\in\calV}\cup \{\tilde{\zeta}\}$
        \vspace{3pt}
        \FOR{$t=1,...,T$}
            \vspace{1pt}
            \STATE  $\tilP_{\zeta_t|x_1^{t-1}}(\zeta)\leftarrow (\tilQ_{X_t|x_1^{t-1}}(h^{-1}_{\key}(\zeta))\wedge \eta), ~\forall {\zeta\in\calZ\backslash\{\tilde{\zeta}\}}$.
            \STATE $\tilP_{\zeta_t|x_1^{t-1}}(\tilde{\zeta})\leftarrow \sum_{x\in\calV}(\tilQ_{X_t|x_1^{t-1}}(x)-\eta)_+$.
            \vspace{2pt}
            \STATE Compute a hash of tokens $x_{t-n}^{t-1}$ with $\key$, and use it as a seed to generate $(G_{t,\zeta})_{\zeta\in\calZ}$ from Gumbel distribution.
            \vspace{2pt}
            \STATE $\zeta_t \leftarrow \argmax _{\zeta\in\calZ}\log(\tilP_{\zeta_t|x_1^{t-1}}(\zeta))+G_{t,\zeta}$.
            \STATE score $\leftarrow$ score $+  \mathbbm{1}\{h_{\key}(x_t) =\zeta_t\}$
            \vspace{3pt}
        \ENDFOR
        \vspace{3pt}
        \IF{score $>T\lambda$}
            \RETURN 1
            \textcolor{gray}{\COMMENT{Input text is watermarked}}
        \ELSE
            \RETURN 0
            \textcolor{gray}{\COMMENT{Input text is unwatermarked}}
        \ENDIF
        \vspace{0.12cm}
    \end{algorithmic}
\end{algorithm}
\end{minipage}
}

\section{Ablation Study and Additional Experimental Results}\label{App: add exp}

All pre-existing models and datasets utilized in this research are publicly available and were used in full accordance with their respective licensing terms, which predominantly include common open-source licenses (such as Apache 2.0, MIT, CC BY-SA) and specific community or research usage agreements.

\paragraph{Efficiency of Watermark Scheme.}  To evaluate the efficiency of our watermarking method, we conduct experiments to measure the average generation time for both watermarked and unwatermarked text. In both scenarios, we generated $500$ texts, each containing $200$ tokens. Table \ref{tab:gen_time_comparison} indicates that the difference in generation time between unwatermarked and watermarked text is less than $0.5$ seconds. This minimal difference confirms that our watermarking method has a negligible impact on generation speed, ensuring practical applicability.

\begin{table}[h]\scriptsize
\centering
\caption{Average generation time comparison for watermarked and unwatermarked text using Llama2-13B.}
\begin{tabular}{lcc}
\toprule
\textbf{Language Model} & \textbf{Setting}      & \textbf{Avg Generation Time (s)} \\ \midrule
Llama2-13B              & Unwatermarked          & 9.110                            \\
Llama2-13B              & Watermarked            & 9.386                            \\ \bottomrule
\end{tabular}
\label{tab:gen_time_comparison}
\end{table}

\begin{table}[h]\scriptsize
\centering
\caption{Performance comparison of different language models and surrogate models under two scenarios: without attack and with token replacement attack.}
\vspace{1em}
\begin{tabular}{llcccc}
\toprule
\textbf{Scenario}         & \textbf{Language Model} & \textbf{Surrogate Model} & \textbf{ROC-AUC} & \textbf{TPR@1\% FPR} & \textbf{TPR@10\% FPR} \\ \midrule
\multirow{3}{*}{Without Attack} 
                          & Llama2-13B             & Llama2-7B               & $0.999$          & $0.998$               & $1.000$               \\ 
                          & Mistral-8 × 7B         & Mistral-7B             & $0.999$          & $0.998$               & $1.000$               \\ 
                          & GPT-J-6B               & GPT-2 large            & $0.997$          & $0.990$               & $0.997$               \\ \midrule
\multirow{3}{*}{With Attack} 
                          & Llama2-13B             & Llama2-7B              & $0.989$          & $0.860$               & $0.976$               \\ 
                          & Mistral-8 × 7B         & Mistral-7B             & $0.990$          & $0.881$               & $0.966$               \\ 
                          & GPT-J-6B               & GPT-2 large            & $0.987$          & $0.892$               & $0.962$               \\ \bottomrule
\end{tabular}
\label{tab:combined_attack_table}
\end{table}
\paragraph{Surrogate Language Model.} SLM plays a crucial role during the detection process of our watermarking method. We examine how the choice of SLM affects the detection performance of our watermarking scheme. The selection of the surrogate model is primarily based on its vocabulary or tokenizer rather than the specific language model within the same family. This choice is critical because, during detection, the text must be tokenized exactly using the same tokenizer as the watermarking model to ensure accurate token recovery. As a result, any language model that employs the same tokenizer can function effectively as the surrogate model. To validate our approach, we apply our watermarking algorithm to GPT-J-6B (a model with 6 billion parameters) and use GPT-2 Large (774 million parameters) as the SLM. Despite differences in developers, training data, architecture, and training methods, these two models share the same tokenizer, making them compatible for this task. We conduct experiments using the C4 dataset, and the results are presented in Table \ref{tab:combined_attack_table}. The results demonstrate the effectiveness of our proposed watermarking method with or without attack, even when using a surrogate model from a different family than the watermarking language model. Notably, the surrogate model, despite having fewer parameters and lower overall capability compared to the watermarking language model, does not compromise the watermarking performance.

\paragraph{Prompt Agnostic.} Prompt agnosticism is a crucial property of LLM watermark detection. We investigate the impact of prompts on our watermark detection performance by conducting experiments to compare detection accuracy with and without prompts attached to the watermarked text during the detection process. The results are presented in Table \ref{tab:prompt_attack_comparison}. Notably, even when prompts are absent and the SLM cannot perfectly reconstruct the same distribution of $\zeta_t$ as in the generation process, our detection performance remains almost unaffected. This demonstrates the robustness of our watermarking method, regardless of whether a prompt is included during the detection phase.

\begin{table*}[h]\scriptsize
\centering
\caption{Performance comparison of Llama2-13B under two scenarios: without attack and with token replacement attack, with and without prompts.}
\begin{tabular}{llccccc}
\toprule
\textbf{Scenario}         & \textbf{Language Model} & \textbf{Surrogate Model} & \textbf{Setting}        & \textbf{ROC-AUC} & \textbf{TPR@1\% FPR} & \textbf{TPR@10\% FPR} \\ \midrule
\multirow{2}{*}{Without Attack} 
                          & Llama2-13B             & Llama2-7B               & Without Prompt          & $0.997$          & $0.983$               & $0.995$               \\ 
                          & Llama2-13B             & Llama2-7B               & With Prompt             & $0.998$          & $0.989$               & $0.996$               \\ \midrule
\multirow{2}{*}{With Attack} 
                          & Llama2-13B             & Llama2-7B               & Without Prompt          & $0.977$          & $0.818$               & $0.953$               \\ 
                          & Llama2-13B             & Llama2-7B               & With Prompt             & $0.979$          & $0.816$               & $0.960$               \\ \bottomrule
\end{tabular}
\label{tab:prompt_attack_comparison}
\end{table*}

\paragraph{Detection Performance with larger $T$.} Increasing text length generally improves detection performance for LLM watermarking. We conduct an additional experiment with $T=500$, and the results are shown below. Both DAWA and KGW+23 show improved performance compared to $T=200$ (reported in Figure \ref{fig:low_fpr}). Notably, DAWA, a distortion-free algorithm, achieves significantly better detection in the low-FPR regime than the distorted KGW+23.
\begin{table}[H]\scriptsize
\centering
\caption{Detection performance at various FPRs for different sequence lengths $T$.}
\vspace{1em}
\label{tab:tpr_vs_fpr}
\begin{tabular}{cccccc}
\toprule
\textbf{Length $T$} & \textbf{Method} & \textbf{TPR@1e-5FPR} & \textbf{TPR@1e-4FPR} & \textbf{TPR@1e-3FPR} & \textbf{TPR@1e-2FPR} \\
\midrule
\multirow{2}{*}{500} & KGW+23 & 0.876 & 0.959 & 0.986 & 0.995 \\
                     & Ours  & \textbf{0.891} & \textbf{0.970} & \textbf{0.996} & \textbf{0.999} \\
\midrule
\multirow{2}{*}{200} & KGW+23 & 0.682 & 0.916 & 0.976 & 0.991 \\
                     & Ours  & \textbf{0.882} & \textbf{0.951} & \textbf{0.992} & \textbf{0.997} \\
\bottomrule
\end{tabular}
\end{table}

\paragraph{Empirical analysis on False Alarm Control.}
We conduct experiments to show the relationship between theoretical FPR (i.e., $\alpha$) and the corresponding empirical FPR. 
As discussed in Lemma \ref{Lem: token level errors}, we set the token-level false alarm rate as $\eta=0.1$ and the sequence length as $T=200$, which controls the sequence-level false alarm rate under $\alpha = \binom{T}{\lceil T \lambda\rceil} \eta^{\lceil T \lambda\rceil}$, where $\lambda$ is the detection threshold. 
For a given theoretical FPR $\alpha$, we calculate the corresponding threshold $\lambda$ and the empirical FPR based on $100$k unwatermarked sentences.
The results, as shown in Table~\ref{exp:false_alarm_control}, confirm that our theoretical guarantee effectively controls the empirical false alarm rate.

\begin{table}[h]\scriptsize
\centering
\setlength{\tabcolsep}{8pt}
\captionsetup{font={small}}
\caption{Theoretical and empirical FPR under different thresholds.}
\label{exp:false_alarm_control}
\vspace{1em}
\begin{tabular}{lllll}
\toprule
Theoretical FPR &9e-03  &2e-03  &5e-04  &9e-05   \\ \midrule
Empirical FPR   &1e-04  &4e-05  &2e-05  &2e-05   \\ \bottomrule
\end{tabular}
\end{table}

\section{Theoretical Extension to Robustness against Broader Attacks}\label{App: ext to robust}
\vspace{-0.5em}
Thus far, we have theoretically examined the optimal detector and watermarking scheme without considering adversarial scenarios. In practice, users may attempt to modify LLM output to remove watermarks through techniques like replacement, deletion, insertion, paraphrasing, or translation. We now show that our framework can be extended to incorporate robustness against these attacks. 

\subsection{Assessment of DAWA against Deletion and Paraphrasing Attacks}\label{app: assess DAWA broad attacks}

We conducted additional experiments on random deletion attacks and paraphrasing attacks, where DAWA achieves comparable robustness to Gumbel-Max and KGW+23. Although it is less robust than EXP-edit, we note that EXP-edit explicitly includes robustness designs and is significantly slower in detection. This demonstrates that DAWA remains competitive while balancing robustness, efficiency, and detection accuracy, with the potential to demonstrate a graceful tradeoff based on our theoretical analyses.
\begin{table}[H]\scriptsize
\centering
\caption{Detection performance under deletion and paraphrasing attacks.}
\vspace{1em}
\label{tab:more robust detection}
\begin{tabular}{lcccccc}
\toprule
\multirow{2}{*}{\textbf{Method}} & \multicolumn{3}{c}{\textbf{Deletion Attacks}} & \multicolumn{3}{c}{\textbf{Paraphrasing Attacks}} \\
\cmidrule(lr){2-4} \cmidrule(lr){5-7}
& ROC-AUC & TPR@1\%FPR & TPR@10\%FPR & ROC-AUC & TPR@1\%FPR & TPR@10\%FPR \\
\midrule
KGW+23         & 0.895 & 0.523 & 0.809 & 0.769 & 0.156 & 0.455 \\
Gumbel-Max    & 0.910 & 0.501 & 0.823 & 0.773 & 0.152 & 0.463 \\
EXP-edit      & 0.978 & 0.955 & {0.970} & {0.853} & {0.245} & {0.703} \\
DAWA (Ours)   & 0.918 & 0.504 & 0.812 & 0.770 & 0.144 & 0.458 \\
\bottomrule
\end{tabular}
\end{table}

\subsection{Theoretical Analysis of \texorpdfstring{$f$}~-Robust Design}
We consider a broad class of attacks, where the text can be altered in arbitrary ways as long as certain latent pattern, such as its \emph{semantics}, is preserved. 
Specifically, let $f:\calV^T \to [K]$ be a function that maps a sequence of tokens $X_1^T$ to a finite latent space $[K]\subset \bbN_+$; for example, $[K]$ may index $K$ distinct semantics clusters and $f$ is a function extracting the semantics. Clearly, $f$ induces an equivalence relation, say, denoted by $\equiv_f$, on $\cV^T$, where $x_1^T \equiv_f {x'}_1^T$ if and only if $f(x_1^T) = f({x'}_1^T)$. Let $\calB_f(x_1^T)$ be an equivalence class containing $x_1^T$.
Under the assumption that the adversary is arbitrarily powerful except that it is unable to move any $x_1^T$ outside its equivalent class $\calB_f(x_1^T)$ (e.g., unable to alter the semantics of $x_1^T$), the 
``$f$-robust'' Type-I and Type-II errors are then defined as\looseness=-3
\begin{align}  
\beta_0(\gamma,Q_{X_1^T},P_{\zeta_1^T},f)&\coloneqq\bbE_{Q_{X_1^T}\otimes P_{\zeta_1^T}}\left[\sup\nolimits_{\substack{\tilx_1^T \in \calB_f(X_1^T)}}\mathbbm{1}\{\gamma(\tilx_1^T,\zeta_1^T )=1\} \right], \\ \beta_1(\gamma,P_{X_1^T,\zeta_1^T},f)&\coloneqq\bbE_{P_{X_1^T,\zeta_1^T}}\left[\sup\nolimits_{\tilx_1^T\in \calB_f(X_1^T)}\mathbbm{1}\{\gamma(\tilx_1^T,\zeta_1^T )=0\} \right].
\end{align}
 Designing a universally optimal $f$-robust detector and watermarking scheme can then be formulated as
 jointly minimizing the $f$-robust Type-II error while constraining the worst-case $f$-robust Type-I error, namely, solving the optimization problem 
\begin{align}
    \inf_{\gamma,P_{X_1^T,\zeta_1^T}}  &\beta_1(\gamma,P_{X_1^T,\zeta_1^T},f)\quad 
    \text{ s.t. }  \sup_{Q_{X_1^T}}\beta_0(\gamma,Q_{X_1^T},P_{\zeta_1^T},f)\leq \alpha, \; \sD_\TV(P_{X_1^T},Q_{X_1^T})\leq \epsilon. \tag{Opt-R} \label{Eq: opt-R}
\end{align}
We prove the following theorem.
\begin{theorem}[Universally minimum $f$-robust Type-II error]\label{Thm: robust min error}
    The universally minimum $f$-robust Type-II error attained from \eqref{Eq: opt-R} is
    \begin{small}
    \begin{equation}
        \beta_1^*(Q_{X_1^T},\alpha,\epsilon,f)\coloneqq \min\limits_{P_{X_1^T}:\sD(P_{X_1^T},Q_{X_1^T})\leq \epsilon}\sum_{k\in[K]}\bigg(\bigg(\sum_{x_1^T:f(x_1^T)=k}P_{X_1^T}(x_1^T) \bigg)- \alpha \bigg)_+. 
    \end{equation}
    \end{small}
\end{theorem}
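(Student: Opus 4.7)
The plan is to reduce the robust optimization \eqref{Eq: opt-R} to the non-robust problem already handled by Theorem~\ref{Thm: universal min typeII}, applied to the quotient alphabet $[K]$ induced by $f$. The intuition is that, since the adversary can freely move $X_1^T$ within its equivalence class $\calB_f(X_1^T)$, any detector may as well depend on $X_1^T$ only through $f(X_1^T)$; once this reduction is in place, the $f$-robust errors collapse into ordinary errors on the reduced alphabet and the existing theorem applies directly.

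For the reduction, I would associate to any detector $\gamma$ the two ``envelope'' detectors
\begin{equation}
\hat\gamma(x_1^T,\zeta_1^T)\coloneqq\inf_{\tilx_1^T\in\calB_f(x_1^T)}\gamma(\tilx_1^T,\zeta_1^T),\qquad \tilde\gamma(x_1^T,\zeta_1^T)\coloneqq\sup_{\tilx_1^T\in\calB_f(x_1^T)}\gamma(\tilx_1^T,\zeta_1^T),
\end{equation}
both of which are constant on each $\calB_f$ and satisfy $\hat\gamma\le\tilde\gamma$ pointwise. Unfolding the definitions of the robust errors gives $\beta_0(\gamma,Q_{X_1^T},P_{\zeta_1^T},f)=\beta_0(\tilde\gamma,Q_{X_1^T},P_{\zeta_1^T})$ and $\beta_1(\gamma,P_{X_1^T,\zeta_1^T},f)=\beta_1(\hat\gamma,P_{X_1^T,\zeta_1^T})$, i.e.\ the robust errors of $\gamma$ agree with the ordinary errors of the $f$-measurable surrogates $\tilde\gamma$ and $\hat\gamma$. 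Because $\hat\gamma$ is constant on equivalence classes, its robust and ordinary errors coincide; combined with $\hat\gamma\le\tilde\gamma$, replacing $\gamma$ by $\hat\gamma$ preserves the objective $\beta_1(\cdot,\cdot,f)$ while only tightening the worst-case Type-I constraint. Hence one may assume without loss of generality that $\gamma(x_1^T,\zeta_1^T)=\gamma'(f(x_1^T),\zeta_1^T)$ for some $\gamma':[K]\times\calZ^T\to\{0,1\}$, and both $f$-robust errors in \eqref{Eq: opt-R} collapse to the ordinary errors of $\gamma'$ acting on $K\coloneqq f(X_1^T)$.

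With the reduction in hand, set $\tilde Q_K(k)\coloneqq\sum_{x_1^T:f(x_1^T)=k}Q_{X_1^T}(x_1^T)$ and $\tilde P_K(k)\coloneqq\sum_{x_1^T:f(x_1^T)=k}P_{X_1^T}(x_1^T)$. As $Q_{X_1^T}$ ranges over $\calP(\cV^T)$, the induced $\tilde Q_K$ ranges over all of $\calP([K])$, so the worst-case Type-I constraint becomes a supremum over every distribution on the reduced alphabet. For a fixed $P_{X_1^T}$ (hence fixed $\tilde P_K$), the joint $P_{X_1^T,\zeta_1^T}$ can be chosen freely subject only to having marginal $P_{X_1^T}$, which precisely corresponds to picking any joint $P_{K,\zeta_1^T}$ with marginal $\tilde P_K$: one can always lift such a joint by coupling $P_{X_1^T\mid K}(x\mid k)=P_{X_1^T}(x)\mathbbm{1}\{f(x)=k\}/\tilde P_K(k)$ conditionally independently of $\zeta_1^T$. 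Invoking Theorem~\ref{Thm: universal min typeII} on the alphabet $[K]$ at distortion level zero (with both the ``text'' and ``watermarked'' marginals equal to $\tilde P_K$) yields that the inner minimum over $\gamma'$ and $P_{K,\zeta_1^T}$ equals $\sum_{k}(\tilde P_K(k)-\alpha)_+$. Taking the outer minimum over $P_{X_1^T}$ with $\sD(P_{X_1^T},Q_{X_1^T})\le\epsilon$ then delivers the claimed expression for $\beta_1^*(Q_{X_1^T},\alpha,\epsilon,f)$.

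The main obstacle is the collapse argument in step one: carefully verifying that the $\sup$/$\inf$ over $\calB_f$ in the definitions of the robust errors transform identically into ordinary errors of $\tilde\gamma$ and $\hat\gamma$, and that the substitution $\gamma\mapsto\hat\gamma$ is genuinely performance-preserving in both the objective and the feasibility direction. A secondary but minor subtlety is the lifting step — ensuring that every reduced joint $P_{K,\zeta_1^T}$ compatible with $\tilde P_K$ actually arises from some $P_{X_1^T,\zeta_1^T}$ whose marginal $P_{X_1^T}$ satisfies the distortion constraint — but this is a routine coupling construction that does not interact with the error calculus.
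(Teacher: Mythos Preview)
Your argument is correct and takes a genuinely different route from the paper. The paper proceeds ``from scratch'': it plugs point masses into the worst-case Type-I constraint to obtain $\sum_{\zeta_1^T}P_{\zeta_1^T}(\zeta_1^T)\sup_{\tilx_1^T\in\calB_f(x_1^T)}\gamma(\tilx_1^T,\zeta_1^T)\le\alpha$, then uses $\inf\le\sup$ inside the Type-II expectation and bounds each block $C(k)$ by $\min\{\sum_{x_1^T:f(x_1^T)=k}P_{X_1^T}(x_1^T),\alpha\}$, thereby deriving the lower bound directly; achievability is established separately via the explicit construction of Theorem~\ref{Thm: optimal robust detector}. You instead collapse the problem onto the quotient alphabet $[K]$ by passing to the $f$-measurable envelope $\hat\gamma$, then invoke Theorem~\ref{Thm: universal min typeII} as a black box at distortion level zero on $[K]$, and finally pull the outer distortion constraint back through the pushforward $P_{X_1^T}\mapsto\tilde P_K$. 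What your approach buys is modularity: both the lower bound and the achievability are inherited in one stroke from the non-robust theorem, and the argument makes transparent that the $f$-robust problem is literally the non-robust problem on the quotient space. What the paper's approach buys is self-containment and an explicit optimal scheme on the original alphabet, which is useful for the downstream algorithmic discussion. Your envelope/collapse step (the substitution $\gamma\mapsto\hat\gamma$) is the key idea the paper does not isolate; it is correct as you state it, and the lifting of $P_{K,\zeta_1^T}$ back to $P_{X_1^T,\zeta_1^T}$ is indeed routine.
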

\vspace{-5pt}

Notably, $\beta_1^*(Q_{X_1^T},\alpha,\epsilon,f)$ 
is suboptimal without an adversary but becomes optimal under the adversarial setting of (Opt-R). The gap between $\beta_1^*(Q_{X_1^T},\alpha,\epsilon,f)$ in Theorem \ref{Thm: robust min error} and $\beta_1^*(Q_{X_1^T},\alpha,\epsilon)$ in Theorem \ref{Thm: universal min typeII} reflects the cost of ensuring robustness, widening as $K$ decreases (i.e., as perturbation strength increases), see Figure~\ref{Fig: uni type ii robust} in appendix for an illustration of the optimal $f$-robust minimum Type-II error when $f$ is a semantic mapping. Similar to Theorem \ref{Thm: optimal detector}, we derive the optimal detector and watermarking scheme achieving $\beta_1^*(Q_{X_1^T},\alpha,\epsilon,f)$, detailed in Appendix \ref{App: optimal robust detector}. These solutions closely resemble those in Theorem \ref{Thm: optimal detector}. For implementation, if the latent space $[K]$ is significantly smaller than $\mathcal{V}^T$, applying the optimal $f$-robust detector and watermarking scheme becomes more effective than those presented in Theorem \ref{Thm: optimal detector}. Additionally, a similar algorithmic strategy to the one discussed in Sections \ref{section: token-level} and \ref{Sec: algorithm} can be employed to address the practical challenges discussed earlier. These extensions and efficient implementations of the function $f$ in practice are promising directions of future research.

\section{Proof of Theorem \ref{Thm: robust min error}}\label{App: pf of Thm: robust min error}

According to the Type-I error constraint, we have $\forall x_1^T\in \calV^T$,
\begin{align}
    \alpha&\geq \max_{Q_{X_1^T}}\bbE_{Q_{X_1^T}\otimes P_{\zeta_1^T}}\left[\sup_{\tilx_1^T\in \calB_f(X_1^T)}\mathbbm{1}\{\gamma(\tilx_1^T,\zeta_1^T )=1\} \right]\\
    &\geq \bbE_{\delta_{x_1^T}\otimes P_{\zeta_1^T}}\left[\sup_{\tilx_1^T\in \calB_f(X_1^T)}\mathbbm{1}\{\gamma(\tilx_1^T,\zeta_1^T )=1\} \right]=\bbE_{P_{\zeta_1^T}}\left[\sup_{\tilx_1^T\in \calB_f(x_1^T)}\gamma(\tilx_1^T,\zeta_1^T ) \right]\\
    &=\sum_{\zeta_1^T}P_{\zeta_1^T}(\zeta_1^T)\sup_{\tilx_1^T\in \calB_f(x_1^T)}\gamma(\tilx_1^T,\zeta_1^T ).
\end{align}

For brevity, let $\calB(k)\coloneqq\calB_f(x_1^T)$ if $f(x_1^T)=k$.
The $f$-robust Type-II error is equal to $1-\bbE_{P_{X_1^T,\zeta_1^T}}[\inf_{\tilx_1^T\in \calB_f(X_1^T)}\gamma(\tilx_1^T,\zeta_1^T ) ]$. We have
\begin{align}\label{Eq: 1-robustTypeII error}
    &\bbE_{P_{X_1^T,\zeta_1^T}}\left[\inf_{\tilx_1^T\in \calB_f(X_1^T)}\gamma(\tilx_1^T,\zeta_1^T) \right]\leq \bbE_{P_{X_1^T,\zeta_1^T}}\left[\sup_{\tilx_1^T\in \calB_f(X_1^T)}\gamma(\tilx_1^T,\zeta_1^T) \right]\\
    &=\sum_{k\in[K]}\underbrace{\sum_{x_1^T:f(x_1^T)=k}\sum_{\zeta_1^T}P_{X_1^T,\zeta_1^T}(x_1^T,\zeta_1^T) \sup_{\tilx_1^T\in \calB_f(x_1^T)}\gamma(\tilx_1^T,\zeta_1^T)}_{C(k)},
\end{align}
where according to the $f$-robust Type-I error constraint, for all $k\in[K]$, 
\begin{equation*}
    C(k)\leq \sum_{x_1^T:f(x_1^T)=k}P_{X_1^T}(x_1^T), \quad \text{and}
\end{equation*}
\begin{align*}
    C(k)&=\sum_{\zeta_1^T}P_{\zeta_1^T}(\zeta_1^T)\sum_{x_1^T:f(x_1^T)=k}P_{X_1^T|\zeta_1^T}(x_1^T|\zeta_1^T)\sup_{\tilx_1^T\in \calB(k)}\gamma(\tilx_1^T,\zeta_1^T) \\
    &\leq \sum_{\zeta_1^T}P_{\zeta_1^T}(\zeta_1^T) \sup_{\tilx_1^T\in \calB(k)}\gamma(\tilx_1^T,\zeta_1^T)\leq \alpha.
\end{align*}
Therefore, 
\begin{align}
    &\bbE_{P_{X_1^T,\zeta_1^T}}\left[\inf_{\tilx_1^T\in \calB(f(X_1^T))}\gamma(\tilx_1^T,\zeta_1^T) \right]\leq \sum_{k\in[K]}C(k) \\
    &\leq \sum_{k\in[K]}\bigg(\bigg(\sum_{x_1^T:f(x_1^T)=k}P_{X_1^T}(x_1^T)\bigg) \wedge \alpha \bigg)=1-\sum_{k\in[K]}\bigg(\bigg(\sum_{x_1^T:f(x_1^T)=k}P_{X_1^T}(x_1^T)\bigg)- \alpha \bigg)_+, \label{Eq: 1-robustTypeI error ub}
\end{align}
where \eqref{Eq: 1-robustTypeI error ub} is maximized by taking
\begin{align}
    P_{X_1^T}=P_{X_1^T}^{*,f}\coloneqq \argmin_{P_{X_1^T}: \sD(P_{X_1^T},Q_{X_1^T})\leq \epsilon} \sum_{k\in[K]}\bigg(\bigg(\sum_{x_1^T:f(x_1^T)=k}P_{X_1^T}(x_1^T)\bigg)- \alpha \bigg)_+.
\end{align}
For any $P_{X_1^T}$, the $f$-robust Type-II error is lower bounded by
\begin{align}
     \bbE_{P_{X_1^T,\zeta_1^T}}\left[\sup_{\tilx_1^T\in \calB_f(X_1^T)}\mathbbm{1}\{\gamma(\tilx_1^T,\zeta_1^T )=0\} \right]&\geq \sum_{k\in[K]}\bigg(\bigg(\sum_{x_1^T:f(x_1^T)=k}P_{X_1^T}(x_1^T)\bigg)- \alpha \bigg)_+.
\end{align}
By plugging $P_{X_1^T}^{*,f}$ into the lower bound, we obtain the universal minimum $f$-robust Type-II error over all possible $\gamma$ and $P_{X_1^T,\zeta_1^T}$, denoted by 
\begin{align}
     \beta_1^*(f, Q_{X_1^T},\epsilon,\alpha)&\coloneqq \min_{P_{X_1^T}: \sD(P_{X_1^T},Q_{X_1^T})\leq \epsilon} \sum_{k\in[K]}\bigg(\bigg(\sum_{x_1^T:f(x_1^T)=k}P_{X_1^T}(x_1^T)\bigg)- \alpha \bigg)_+. \label{Eq: robustType-II LB}
\end{align}

\section{Optimal Type of \texorpdfstring{$f$}~-Robust Detectors and Watermarking Schemes}\label{App: optimal robust detector}

\begin{theorem}[Optimal type of $f$-robust detectors and watermarking schemes] \label{Thm: optimal robust detector}
    Let $\Gamma^*_f$ be a collection of detectors that accept the form 
\begin{align}
    \gamma(X_1^T,\zeta_1^T)= \mathbbm{1}\{X_1^T=g(\zeta_1^T) \text{ or } f(X_1^T)=g(\zeta_1^T)\}
\end{align}
for some function $g:\calZ^T\to \calS$, $\calS\cap ([K]\cup \calV^T)\ne \emptyset$ and $|\calS|> K$. 
If and only if the detector $\gamma\in\Gamma^*_f$,
the minimum Type-II error attained from  \eqref{Eq: opt-R} reaches $\beta_1^*( Q_{X_1^T},\epsilon,\alpha,f)$ in \eqref{Eq: robustType-II LB} for all text distribution $Q_{X_1^T}\in\calP(\calV^T)$ and distortion level $\epsilon\in \bbR_{\geq 0}$.

After enlarging $\calZ^T$ to include redundant auxiliary values,  the $\epsilon$-distorted optimal $f$-robust watermarking scheme  $P_{X_1^T,\zeta_1^T}^{*,f}(x_1^T,\zeta_1^T)$ is given as follows: 
    \begin{align}
        P_{X_1^T}^{*,f}\coloneqq \argmin_{P_{X_1^T}: \sD_\TV(P_{X_1^T},Q_{X_1^T})\leq \epsilon} \sum_{k\in[K]}\bigg(\bigg(\sum_{x_1^T:f(x_1^T)=k}P_{X_1^T}(x_1^T)\bigg)- \alpha \bigg)_+,
    \end{align}
    and for any $x_1^T\in \calV^T$,
\begin{enumerate}[1),leftmargin=*]
    \item for all $\zeta_1^T$ s.t.\ $\sup_{\tilx_1^T\in \calB(f(x_1^T))}\gamma(\tilx_1^T,\zeta_1^T )=1$: $P_{\zeta_1^T|X_1^T}^{*,f}(\zeta_1^T|x_1^T)$ satisfies 
    \begin{align}
        \!\!\!\!\!\!\!\!\!\!\!\! \sum_{\tilx_1^T\in \calB_f(x_1^T)} \!\!\! P_{X_1^T}^{*,f}(\tilx_1^T)\sum_{\zeta_1^T}   P_{\zeta_1^T|X_1^T}^{*,f}(\zeta_1^T|\tilx_1^T) \sup_{\tilx_1^T\in \calB_f(x_1^T)}\gamma(\tilx_1^T,\zeta_1^T ) =\bigg(\sum_{\tilx_1^T\in \calB_f(x_1^T)}P_{X_1^T}^{*,f}(\tilx_1^T)\bigg)\wedge \alpha. 
    \end{align}
    
    \item $\forall \zeta_1^T$ s.t. $|\{x_1^T\in\calV^T:\gamma(x_1^T,\zeta_1^T)=1\}|=0$: $P_{X_1^T,\zeta_1^T}^{*,f}(x_1^T,\zeta_1^T)$ satisfies
    \begin{align}
         \!\!\!\!\!\!\!\!\!\sum_{\tilx_1^T\in \calB_f(x_1^T)} \!\!\! P_{X_1^T}^{*,f}(x_1^T)\sum_{\zeta_1^T:|\{x_1^T:\gamma(x_1^T,\zeta_1^T)=1\}|=0}   P_{\zeta_1^T|X_1^T}^{*,f}(\zeta_1^T|x_1^T) 
        =\bigg(\bigg(\sum_{\tilx_1^T\in \calB_f(x_1^T)}P_{X_1^T}^{*,f}(\tilx_1^T)\bigg)- \alpha \bigg)_+.\label{Eq: robust empty set prob}
    \end{align}

    \item all other cases of $\zeta_1^T$: $P_{X_1^T,\zeta_1^T}^{*,f}(x_1^T,\zeta_1^T)=0$.
\end{enumerate}
\end{theorem}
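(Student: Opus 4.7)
The plan is to lift the proof of Theorem \ref{Thm: optimal detector} to the quotient space $\calV^T/\!\equiv_f$, which is canonically identified with $[K]$ via the map $f$. First I would observe that because the $f$-robust errors take a supremum or infimum of $\gamma(\cdot,\zeta_1^T)$ over each equivalence class $\calB_f(X_1^T)$, any detector whose per-$\zeta_1^T$ acceptance set is not a union of $\equiv_f$-classes leaves a strict gap between $\sup$ and $\inf$ that the adversary exploits, making it impossible to simultaneously saturate the Type-II lower bound of Theorem \ref{Thm: robust min error} and satisfy the Type-I constraint. This restricts the candidate detectors to those for which $\gamma(x_1^T,\zeta_1^T)$ depends on $x_1^T$ only through $f(x_1^T)$, which is precisely the ``or'' form in $\Gamma^*_f$: accepting a specific $x_1^T\in\calV^T$ is promoted by $\sup_{\tilx_1^T\in\calB_f(x_1^T)}$ to accepting the whole class, accepting a class label in $[K]$ already operates at the class level, and mapping $\zeta_1^T$ to values outside $[K]\cup\calV^T$ produces the ``redundant'' auxiliary values required for slack.

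Next, I would replay the tight-bound analysis from the proof of Theorem \ref{Thm: robust min error}: the inequalities $C(k)\leq \alpha$ and $C(k)\leq \sum_{x_1^T:f(x_1^T)=k}P_{X_1^T}(x_1^T)$ for each $k\in[K]$ must both attain equality at the optimum. On the quotient, this is exactly the saturation condition of Theorem \ref{Thm: optimal detector} with $(\calV^T, Q_{X_1^T}, P_{X_1^T}, \alpha)$ replaced by $([K], Q_{f(X_1^T)}, P_{f(X_1^T)}, \alpha)$, where $Q_{f(X_1^T)}$ and $P_{f(X_1^T)}$ denote the pushforwards under $f$. Directly invoking the characterization in Theorem \ref{Thm: optimal detector} at the class level then identifies the optimal detectors as surjections $g$ from $\calZ^T$ onto some $\calS$ that covers the ``useful'' labels in $[K]\cup\calV^T$ while having $|\calS|>K$ so that redundancy is available to absorb excess mass above $\alpha$; this gives exactly $\Gamma^*_f$.

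For achievability, I would construct $P_{X_1^T,\zeta_1^T}^{*,f}$ by running the mass-transport recipe illustrated in Figure \ref{Fig: watermarking} at the class level: solve the distortion-constrained minimization for $P_{X_1^T}^{*,f}$; for each class $k$ with aggregate mass $\sum_{x_1^T:f(x_1^T)=k}P_{X_1^T}^{*,f}(x_1^T)\leq \alpha$, distribute all of its mass onto the $\zeta_1^T$'s accepted for class $k$; for classes with aggregate mass exceeding $\alpha$, cap the total weight on accepted $\zeta_1^T$'s at $\alpha$ and ship the remainder onto the redundant $\zeta_1^T$'s carrying no accepted $x_1^T$. Direct computation then yields $\sup_{Q_{X_1^T}}\beta_0(\gamma,Q_{X_1^T},P_{\zeta_1^T}^{*,f},f)\leq \alpha$ and $\beta_1(\gamma,P_{X_1^T,\zeta_1^T}^{*,f},f)=\beta_1^*(Q_{X_1^T},\alpha,\epsilon,f)$, matching the lower bound of Theorem \ref{Thm: robust min error}.

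The main obstacle I anticipate is the converse direction showing that detectors outside $\Gamma^*_f$ are strictly suboptimal. Paralleling Appendix \ref{App: pf of Thm: optimal detector}, one must handle three failure modes, now on the quotient space: the image of $g$ missing some class in $[K]$, no redundancy (i.e., $|\calS|\leq K$), and overlapping acceptance regions across distinct classes. In each case I would exhibit a concrete $(Q_{X_1^T},\epsilon)$ for which the worst-case Type-I constraint binds strictly before the Type-II lower bound can be reached. The argument is conceptually the same as in the non-robust case, but the bookkeeping is heavier because the relevant objects live on equivalence classes and the hybrid form of $g$ on $[K]\cup\calV^T$ requires verifying that both ways of accepting (individual token sequence vs.\ class label) collapse to the same class-level behavior under the adversarial sup/inf.
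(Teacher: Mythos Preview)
Your plan is correct and aligns with the paper's own proof: both reduce the $f$-robust problem to the non-robust one on the quotient $[K]$, verify achievability by the class-level mass-transport construction, and establish the converse through the same three failure modes (image of $g$ missing a class, no redundant slack when $|\calS|\le K$, and overlapping acceptance regions across distinct classes).

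The one presentational difference is that you invoke Theorem~\ref{Thm: optimal detector} as a black box at the quotient level, whereas the paper re-executes the case analysis from scratch with concrete small examples (e.g., $T=1$, $\calV=\{x_1,x_2,x_3\}$, $K=2$) to exhibit specific $(Q_X,\epsilon)$ for which the worst-case Type-I constraint is violated before the Type-II bound saturates. Your modular route is cleaner and makes the reduction transparent; the paper's route is more self-contained and gives explicit witnesses, which can be useful if one wants to see exactly where a suboptimal detector fails. Your preliminary observation---that a detector whose per-$\zeta_1^T$ acceptance set is not a union of $\equiv_f$-classes suffers a strict $\inf<\sup$ gap exploited by the adversary---is a step the paper uses only implicitly (via the inequality $\inf\le\sup$ in the proof of Theorem~\ref{Thm: robust min error}), and making it explicit is a nice simplification.
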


\begin{proof}[Proof of Theorem \ref{Thm: optimal robust detector}]
    When $f$ is an identity mapping, it is equivalent to Theorem \ref{Thm: optimal detector}. When $f:\calV^T\to [K]$ is some other function, following from the proof of Theorem \ref{Thm: optimal detector}, we consider three cases. 
   \setlist[itemize]{leftmargin=*}
   \begin{itemize}
       \item \underline{\textbf{Case 1:} $\calS\cap ([K]\cup \calV^T)\ne \emptyset$ but $|\calS|< K$.} It is impossible for the detector to detect all the watermarked text sequences. That is,  there exist $\tilx_1^T$ such that for all $\zeta_1^T$, $\gamma(\tilx_1^T,\zeta_1^T)=0$.
       Under this case,  in Appendix \ref{App: pf of Thm: robust min error}, $C(f(\tilx_1^T))=0\ne (\sum_{x_1^T:f(x_1^T)=f(\tilx_1^T)}P_{X_1^T}(x_1^T)) \wedge \alpha $, which means the $f$-robust Type-II error cannot reach the lower bound.

       \item \underline{\textbf{Case 2:} $\calS\cap ([K]\cup \calV^T)\ne \emptyset$ but $|\calS|= K$.} Under this condition, the detector needs to accept the form $\gamma(X_1^T,\zeta_1^T)= \mathbbm{1}\{f(X_1^T)=g(\zeta_1^T)\}$ so as to detect all possible watermarked text. Otherwise, it will degenerate to Case 1. We can see $f(X_1^T)$ as an input variable and rewrite the detector as $\gamma'(f(X_1^T),\zeta_1^T)=\gamma(X_1^T,\zeta_1^T)= \mathbbm{1}\{f(X_1^T)=g(\zeta_1^T)\}$. Similar the proof technique of Theorem \ref{Thm: optimal detector}, it can be shown that $C(k)$ in Appendix \ref{App: pf of Thm: robust min error} cannot equal $(\sum_{x_1^T:f(x_1^T)=k}P_{X_1^T}(x_1^T)) \wedge \alpha $ for all $k\in[K]$, while the worst-case $f$-robust Type-I error remains upper bounded by $\alpha$ for all $Q_{X_1^T}$ and $\epsilon$. 

       \item \underline{\textbf{Case 3:} Let $\Xi_\gamma(x_1^T)\coloneqq \{\zeta_1^T\in\calZ^T: \gamma(x_1^T,\zeta_1^T)=1\}$. $\exists x_1^T, y_1^T \in \calV^T, \text{ s.t. } f(x_1^T)\ne f(y_1^T)$} \\ \underline{and $\Xi_\gamma(x_1^T)\cap \Xi_\gamma(y_1^T)\ne\emptyset$.} For any detector $\gamma\notin \Gamma^*_f$ that does not belong to Cases 1 and 2, it belongs to Case 3. Let us start from a simple case where $T=1$, $\calV=\{x_1,x_2,x_3\}$, $K=2$, $\calZ=\{\zeta_1,\zeta_2,\zeta_3\}$, and $\calS=[2]$. Consider the mapping $f$ and the detector as follows: $f(x_1)=f(x_2)=1$, $f(x_3)=2$, $\gamma(x_1,\zeta_1)=\gamma(x_1,\zeta_1)=1$, $\gamma(x_3,\zeta_2)=1$, and $\gamma(x,\zeta)=0$ for all other pairs $(x,\zeta)$. When  $C(k)=(\sum_{x_1^T:f(x_1^T)=k}P_{X_1^T}(x_1^T)) \wedge \alpha$ for all $k\in[K]$, i.e.,
       \begin{align}
           &P_{X,\zeta}(x_1,\zeta_1)+P_{X,\zeta}(x_1,\zeta_2)+P_{X,\zeta}(x_2,\zeta_1)+P_{X,\zeta}(x_2,\zeta_2)=(P_X(x_1)+P_X(x_2))\wedge \alpha, \nn\\
           \text{and} \quad & P_{X,\zeta}(x_3,\zeta_2)=P_X(x_3)\wedge \alpha,\nn
       \end{align}
        then the worst-case $f$-robust Type-I error is lower bounded by
        \begin{align}
            &\max_{Q_{X_1^T}}\bbE_{Q_{X_1^T}\otimes P_{\zeta_1^T}}\left[\sup_{\tilx_1^T\in \calB_f(X_1^T)}\mathbbm{1}\{\gamma(\tilx_1^T,\zeta_1^T )=1\} \right]\\
            &\geq \bbE_{P_{\zeta_1^T}}\left[\sup_{\tilx_1^T\in \calB(1)}\mathbbm{1}\{\gamma(\tilx_1^T,\zeta_1^T )=1\} \right]\\
            &=(P_X(x_1)+P_X(x_2))\wedge \alpha+P_X(x_3)\wedge \alpha\\
            &>\alpha, \quad \text{if} \; P_X(x_1)+P_X(x_2)>\alpha \text{ or } P_X(x_3)>\alpha.
        \end{align}
         Thus, for any $Q_X$ such that $\{P_X\in\calP(\calV): \sD_\TV(P_X,Q_X)\leq \epsilon\}\subseteq \{P_X\in\calP(\calV): P_X(x_1)+P_X(x_2)>\alpha \text{ or } P_{X}(x_2)>\alpha\}$, the false-alarm constraint is violated  when $C(k)=(\sum_{x_1^T:f(x_1^T)=k}P_{X_1^T}(x_1^T)) \wedge \alpha$ for all $k\in[K]$.
         The result can be generalized to larger $(T,\calV,\calZ,K,\calS)$, other functions $f$, and other detectors that belong to Case 3.
   \end{itemize}
   In conclusion, if and only if $\gamma\in\Gamma^*$, the minimum Type-II error attained from  \eqref{Eq: opt-R} reaches the universal minimum $f$-robust Type-II error $\beta_1^*(f, Q_{X_1^T},\epsilon,\alpha)$ in \eqref{Eq: robustType-II LB} for all $Q_{X_1^T}\in\calP(\calV^T)$ and $\epsilon\in \bbR_{\geq 0}$.

Under the watermarking scheme $P_{X_1^T,\zeta_1^T}^{*,f}$,
the $f$-robust Type-I and Type-II errors are given by:

\textbf{$f$-robust Type-I error:}
\begin{align}
     \because & \forall y_1^T\in\calV^T, \quad  \bbE_{P_{\zeta_1^T}^{*,f}}\left[\sup_{\tilx_1^T\in \calB_f(y_1^T)}\mathbbm{1}\{\gamma(\tilx_1^T,\zeta_1^T )=1\} \right] \\ 
    &=\sum_{\zeta_1^T}\sum_{x_1^T}P_{X_1^T,\zeta_1^T}^{*,f}(x_1^T,\zeta_1^T) \sup_{\tilx_1^T\in \calB_f(y_1^T))}\mathbbm{1}\{\gamma(\tilx_1^T,\zeta_1^T )=1\}\\
    &=\sum_{x_1^T\in \calB_f(y_1^T)}P_{X_1^T}^{*,f}(x_1^T)\sum_{\zeta_1^T}   P_{\zeta_1^T|X_1^T}^{*,f}(\zeta_1^T|x_1^T)\sup_{\tilx_1^T\in \calB_f(y_1^T)}\mathbbm{1}\{\gamma(\tilx_1^T,\zeta_1^T )=1\}\\
    &=\bigg(\sum_{x_1^T\in \calB_f(y_1^T)}P_{X_1^T}^{*,f}(x_1^T)\bigg)\wedge \alpha \leq \alpha,\\
    & \!\!\!\! \text{and since any distribution $Q_{X_1^T}$ can be written as a linear combinations of $\delta_{y_1^T}$,} \nn\\
    \therefore &\sup_{Q_{X_1^T}}\bbE_{Q_{X_1^T}P_{\zeta_1^T}^{*,f}}\left[\sup_{\tilx_1^T\in \calB_f(X_1^T)}\mathbbm{1}\{\gamma(\tilx_1^T,\zeta_1^T )=1\} \right]\leq \alpha.
\end{align}
\textbf{$f$-robust Type-II error:}
\begin{align}
     \quad &1-\bbE_{P^{*,f}_{X_1^T,\zeta_1^T}}\left[\sup_{\tilx_1^T\in \calB_f(X_1^T)}\mathbbm{1}\{\gamma(\tilx_1^T,\zeta_1^T )=1\} \right] \nn\\
    &=1-\sum_{x_1^T}\sum_{\zeta_1^T}P_{X_1^T,\zeta_1^T}^{*,f}(x_1^T,\zeta_1^T) \sup_{\tilx_1^T\in \calB_f(x_1^T)}\mathbbm{1}\{\gamma(\tilx_1^T,\zeta_1^T )=1\}\\
    &=1-\sum_{k\in[K]}\sum_{x_1^T\in\calB(k)}P_{X_1^T}^{*,f}(x_1^T)\sum_{\zeta_1^T}   P_{\zeta_1^T|X_1^T}^{*,f}(\zeta_1^T|x_1^T)\sup_{\tilx_1^T\in \calB(k)}\mathbbm{1}\{\gamma(\tilx_1^T,\zeta_1^T )=1\} \\
    &=1-\sum_{k\in[K]}\bigg(\Big(\sum_{x_1^T\in\calB(k)}P_{X_1^T}^{*,f}(x_1^T)\Big)\wedge \alpha \bigg)\\
    &=\sum_{k\in[K]}\bigg(\Big(\sum_{x_1^T\in\calB(k)}P_{X_1^T}^{*,f}(x_1^T)\Big)- \alpha \bigg)_+.
\end{align}
The optimality of $P_{X_1^T,\zeta_1^T}^{*,f}$ is thus proved. 
\end{proof}

Figure \ref{Fig: uni type ii robust} compares the universally minimum Type-II errors with and without semantic-invariant text modification.
\begin{figure}[!t]
    \begin{center}
    \includegraphics[scale=0.8]{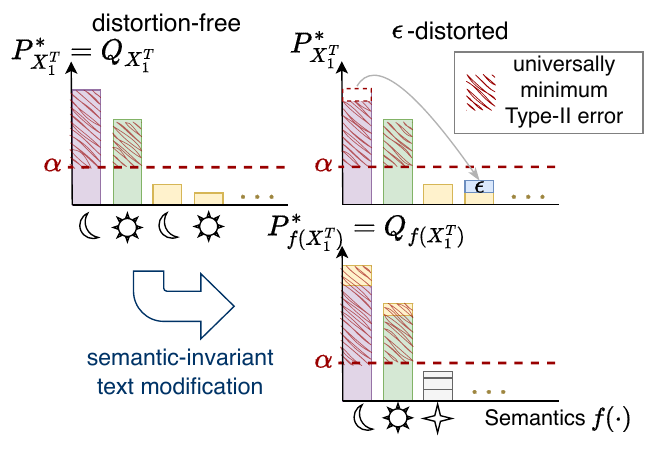}
    \vspace{-10pt}
    \caption{Universally minimum Type-II error w/o distortion and with semantic-invariant text modification.}
    \label{Fig: uni type ii robust}
    \end{center}
\end{figure}

\section{Broader Impacts}\label{App:impact}
This paper introduces a novel framework and algorithm for LLM watermarking, aimed at advancing the field of machine learning by enhancing AI safety and data authenticity. The primary positive impacts of our work include its potential to identify AI-generated misinformation, ensure integrity in academia and society, protect intellectual property (IP), and enhance public trust in AI technologies. However, given the dual-use potential of watermarking techniques, it is crucial to consider privacy concerns raised by possible misuse--such as unauthorized tracking of data. We encourage the development of ethical guidelines to ensure the responsible use and deployment of this technology. By considering both beneficial outcomes and potential risks, our work seeks to contribute responsibly to the machine learning community.





\newpage
\section*{NeurIPS Paper Checklist}

\begin{enumerate}

\item {\bf Claims}
    \item[] Question: Do the main claims made in the abstract and introduction accurately reflect the paper's contributions and scope?
    \item[] Answer: \answerYes{} 
    \item[] Justification: The abstract and introduction accurately summarize the paper's contributions, including the development of a unified theoretical framework for joint watermark scheme-detector optimization, derivation of optimal solutions, the DAWA algorithm, and experimental validation. These claims are consistent with the content presented in the subsequent sections and appendices.
    \item[] Guidelines:
    \begin{itemize}
        \item The answer NA means that the abstract and introduction do not include the claims made in the paper.
        \item The abstract and/or introduction should clearly state the claims made, including the contributions made in the paper and important assumptions and limitations. A No or NA answer to this question will not be perceived well by the reviewers. 
        \item The claims made should match theoretical and experimental results, and reflect how much the results can be expected to generalize to other settings. 
        \item It is fine to include aspirational goals as motivation as long as it is clear that these goals are not attained by the paper. 
    \end{itemize}

\item {\bf Limitations}
    \item[] Question: Does the paper discuss the limitations of the work performed by the authors?
    \item[] Answer: \answerYes{} 
    \item[] Justification: The paper discusses ``Practical Challenges'' in implementing the theoretically optimal scheme (Section \ref{Sec: joint optimize}, page 5, line 225), which motivates the token-level design. Section \ref{Sec:extension} and Appendix \ref{App: ext to robust} discuss the robustness of DAWA, including its performance against certain attacks like deletion and paraphrasing (empirically addressed in Appendix \ref{App: add exp}, Table \ref{tab:more robust detection}) and outlines theoretical extensions for stronger future robustness, implying current trade-offs. A specific acknowledged limitation is that the sensitivity of DAWA to surrogate model mismatch in extreme cases was not empirically tested, although Appendix \ref{App: add exp} shows robustness to varied SLMs in non-extreme scenarios.
    \item[] Guidelines:
    \begin{itemize}
        \item The answer NA means that the paper has no limitation while the answer No means that the paper has limitations, but those are not discussed in the paper. 
        \item The authors are encouraged to create a separate "Limitations" section in their paper.
        \item The paper should point out any strong assumptions and how robust the results are to violations of these assumptions (e.g., independence assumptions, noiseless settings, model well-specification, asymptotic approximations only holding locally). The authors should reflect on how these assumptions might be violated in practice and what the implications would be.
        \item The authors should reflect on the scope of the claims made, e.g., if the approach was only tested on a few datasets or with a few runs. In general, empirical results often depend on implicit assumptions, which should be articulated.
        \item The authors should reflect on the factors that influence the performance of the approach. For example, a facial recognition algorithm may perform poorly when image resolution is low or images are taken in low lighting. Or a speech-to-text system might not be used reliably to provide closed captions for online lectures because it fails to handle technical jargon.
        \item The authors should discuss the computational efficiency of the proposed algorithms and how they scale with dataset size.
        \item If applicable, the authors should discuss possible limitations of their approach to address problems of privacy and fairness.
        \item While the authors might fear that complete honesty about limitations might be used by reviewers as grounds for rejection, a worse outcome might be that reviewers discover limitations that aren't acknowledged in the paper. The authors should use their best judgment and recognize that individual actions in favor of transparency play an important role in developing norms that preserve the integrity of the community. Reviewers will be specifically instructed to not penalize honesty concerning limitations.
    \end{itemize}

\item {\bf Theory assumptions and proofs}
    \item[] Question: For each theoretical result, does the paper provide the full set of assumptions and a complete (and correct) proof?
    \item[] Answer: \answerYes{} 
    \item[] Justification: Key theoretical results (Theorems \ref{Thm: universal min typeII}, \ref{Thm: optimal detector}, \ref{Thm: robust min error}, \ref{Thm: optimal robust detector}; Lemma \ref{Lem: token level errors}; Proposition \ref{Prop: robust against replace}) are presented with their assumptions stated. The paper indicates that complete proofs are provided in the corresponding appendices (e.g., Appendix \ref{App: pf of Thm: universal min typeII} for Theorem 1, Appendix \ref{App: pf of Thm: optimal detector} for Theorem 2, Appendix \ref{App: pf of Thm: robust min error} for Theorem \ref{Thm: robust min error}, Appendix \ref{App: optimal robust detector} for \ref{Thm: optimal robust detector}, Appendix \ref{App: pf of Lem: token level errors} for Lemma \ref{Lem: token level errors}), following standard practice.
    \item[] Guidelines:
    \begin{itemize}
        \item The answer NA means that the paper does not include theoretical results. 
        \item All the theorems, formulas, and proofs in the paper should be numbered and cross-referenced.
        \item All assumptions should be clearly stated or referenced in the statement of any theorems.
        \item The proofs can either appear in the main paper or the supplemental material, but if they appear in the supplemental material, the authors are encouraged to provide a short proof sketch to provide intuition. 
        \item Inversely, any informal proof provided in the core of the paper should be complemented by formal proofs provided in appendix or supplemental material.
        \item Theorems and Lemmas that the proof relies upon should be properly referenced. 
    \end{itemize}

    \item {\bf Experimental result reproducibility}
    \item[] Question: Does the paper fully disclose all the information needed to reproduce the main experimental results of the paper to the extent that it affects the main claims and/or conclusions of the paper (regardless of whether the code and data are provided or not)?
    \item[] Answer: \answerYes{} 
    \item[] Justification: Section \ref{Sec: experiments} (Experiments) and Appendix \ref{App: add exp} detail the experimental settings, including models used (Llama2-13B, Mistral-8$\times$7B, surrogate models), datasets (C4, ELI5, Wikipedia), prompts, DAWA parameters $(\eta,T)$, baselines, and evaluation metrics. This should provide sufficient information for others to understand and attempt to reproduce the core findings.
    \item[] Guidelines:
    \begin{itemize}
        \item The answer NA means that the paper does not include experiments.
        \item If the paper includes experiments, a No answer to this question will not be perceived well by the reviewers: Making the paper reproducible is important, regardless of whether the code and data are provided or not.
        \item If the contribution is a dataset and/or model, the authors should describe the steps taken to make their results reproducible or verifiable. 
        \item Depending on the contribution, reproducibility can be accomplished in various ways. For example, if the contribution is a novel architecture, describing the architecture fully might suffice, or if the contribution is a specific model and empirical evaluation, it may be necessary to either make it possible for others to replicate the model with the same dataset, or provide access to the model. In general. releasing code and data is often one good way to accomplish this, but reproducibility can also be provided via detailed instructions for how to replicate the results, access to a hosted model (e.g., in the case of a large language model), releasing of a model checkpoint, or other means that are appropriate to the research performed.
        \item While NeurIPS does not require releasing code, the conference does require all submissions to provide some reasonable avenue for reproducibility, which may depend on the nature of the contribution. For example
        \begin{enumerate}
            \item If the contribution is primarily a new algorithm, the paper should make it clear how to reproduce that algorithm.
            \item If the contribution is primarily a new model architecture, the paper should describe the architecture clearly and fully.
            \item If the contribution is a new model (e.g., a large language model), then there should either be a way to access this model for reproducing the results or a way to reproduce the model (e.g., with an open-source dataset or instructions for how to construct the dataset).
            \item We recognize that reproducibility may be tricky in some cases, in which case authors are welcome to describe the particular way they provide for reproducibility. In the case of closed-source models, it may be that access to the model is limited in some way (e.g., to registered users), but it should be possible for other researchers to have some path to reproducing or verifying the results.
        \end{enumerate}
    \end{itemize}

\item {\bf Open access to data and code}
    \item[] Question: Does the paper provide open access to the data and code, with sufficient instructions to faithfully reproduce the main experimental results, as described in supplemental material?
    \item[] Answer: \answerNo{} 
    \item[] Justification: We are committed to open science and will make the code for our DAWA algorithm and experimental scripts publicly available on  GitHub upon publication of this paper. The datasets used in our experiments (C4, ELI5, Wikipedia) are already publicly available and are cited in the manuscript. While an anonymized version of our code is not available at the time of submission, the paper provides detailed pseudo-code for DAWA (Appendix \ref{App: pseudo codes}) and a thorough description of our experimental setup (Section \ref{Sec: experiments} and Appendix \ref{App: add exp}) to allow for a detailed understanding and facilitate future reproduction.
    \item[] Guidelines:
    \begin{itemize}
        \item The answer NA means that paper does not include experiments requiring code.
        \item Please see the NeurIPS code and data submission guidelines (\url{https://nips.cc/public/guides/CodeSubmissionPolicy}) for more details.
        \item While we encourage the release of code and data, we understand that this might not be possible, so “No” is an acceptable answer. Papers cannot be rejected simply for not including code, unless this is central to the contribution (e.g., for a new open-source benchmark).
        \item The instructions should contain the exact command and environment needed to run to reproduce the results. See the NeurIPS code and data submission guidelines (\url{https://nips.cc/public/guides/CodeSubmissionPolicy}) for more details.
        \item The authors should provide instructions on data access and preparation, including how to access the raw data, preprocessed data, intermediate data, and generated data, etc.
        \item The authors should provide scripts to reproduce all experimental results for the new proposed method and baselines. If only a subset of experiments are reproducible, they should state which ones are omitted from the script and why.
        \item At submission time, to preserve anonymity, the authors should release anonymized versions (if applicable).
        \item Providing as much information as possible in supplemental material (appended to the paper) is recommended, but including URLs to data and code is permitted.
    \end{itemize}

\item {\bf Experimental setting/details}
    \item[] Question: Does the paper specify all the training and test details (e.g., data splits, hyperparameters, how they were chosen, type of optimizer, etc.) necessary to understand the results?
    \item[] Answer: \answerYes{} 
    \item[] Justification: Section \ref{Sec: experiments} ("Experiment Settings") and Appendix \ref{App: add exp} provide details on models, datasets, prompts, watermark parameters (e.g., $\eta=0.2,T=200$ for DAWA, strength for KGW+23), and evaluation metrics. The choice of $\eta$ is linked to theoretical considerations in Lemma \ref{Lem: token level errors} and Appendix \ref{App: pf of Lem: token level errors}.
    \item[] Guidelines:
    \begin{itemize}
        \item The answer NA means that the paper does not include experiments.
        \item The experimental setting should be presented in the core of the paper to a level of detail that is necessary to appreciate the results and make sense of them.
        \item The full details can be provided either with the code, in appendix, or as supplemental material.
    \end{itemize}

\item {\bf Experiment statistical significance}
    \item[] Question: Does the paper report error bars suitably and correctly defined or other appropriate information about the statistical significance of the experiments?
    \item[] Answer: \answerNo{} 
    \item[] Justification: The paper reports primary evaluation metrics such as ROC-AUC and True Positive Rates (TPR) at specific False Positive Rates (FPRs), with detailed explanations of these metrics provided in Section \ref{Sec: experiments}. While error bars, confidence intervals, or formal statistical significance tests are not included for these reported values, our experiments are conducted on large, fixed test sets (e.g., $10^5$ texts for ultra-low FPR analysis, as described in Section  \ref{Sec: experiments}), which provides stability to the point estimates of these metrics. The observed performance differences, as shown in Figure \ref{fig:low_fpr} and Table \ref{tab:combined_detection}, are substantial. We acknowledge that quantifying variability, for instance, through multiple runs or bootstrapping, could further substantiate the comparisons, though it is often not standard practice for ROC-AUC and TPR reporting in this specific evaluation context due to the nature of these aggregate metrics on large test corpora and the computational demands of re-running experiments with large language models.
    \item[] Guidelines:
    \begin{itemize}
        \item The answer NA means that the paper does not include experiments.
        \item The authors should answer "Yes" if the results are accompanied by error bars, confidence intervals, or statistical significance tests, at least for the experiments that support the main claims of the paper.
        \item The factors of variability that the error bars are capturing should be clearly stated (for example, train/test split, initialization, random drawing of some parameter, or overall run with given experimental conditions).
        \item The method for calculating the error bars should be explained (closed form formula, call to a library function, bootstrap, etc.)
        \item The assumptions made should be given (e.g., Normally distributed errors).
        \item It should be clear whether the error bar is the standard deviation or the standard error of the mean.
        \item It is OK to report 1-sigma error bars, but one should state it. The authors should preferably report a 2-sigma error bar than state that they have a 96\% CI, if the hypothesis of Normality of errors is not verified.
        \item For asymmetric distributions, the authors should be careful not to show in tables or figures symmetric error bars that would yield results that are out of range (e.g. negative error rates).
        \item If error bars are reported in tables or plots, The authors should explain in the text how they were calculated and reference the corresponding figures or tables in the text.
    \end{itemize}

\item {\bf Experiments compute resources}
    \item[] Question: For each experiment, does the paper provide sufficient information on the computer resources (type of compute workers, memory, time of execution) needed to reproduce the experiments?
    \item[] Answer: \answerYes{} 
    \item[] Justification: Section \ref{Sec: experiments} mentions the use of ``Nvidia A100 GPUs''. Appendix \ref{App: add exp} (Table \ref{tab:gen_time_comparison}) provides average generation time comparisons for watermarked and unwatermarked text, giving an indication of the computational overhead of the watermarking process. This provides a baseline for understanding compute requirements.
    \item[] Guidelines:
    \begin{itemize}
        \item The answer NA means that the paper does not include experiments.
        \item The paper should indicate the type of compute workers CPU or GPU, internal cluster, or cloud provider, including relevant memory and storage.
        \item The paper should provide the amount of compute required for each of the individual experimental runs as well as estimate the total compute. 
        \item The paper should disclose whether the full research project required more compute than the experiments reported in the paper (e.g., preliminary or failed experiments that didn't make it into the paper). 
    \end{itemize}
    
\item {\bf Code of ethics}
    \item[] Question: Does the research conducted in the paper conform, in every respect, with the NeurIPS Code of Ethics \url{https://neurips.cc/public/EthicsGuidelines}?
    \item[] Answer: \answerYes{} 
    \item[] Justification: The research aims to develop methods for identifying AI-generated text, which can help mitigate misuse such as disinformation and plagiarism, aligning with ethical goals of responsible AI.  We have strived for integrity in our methodology, theoretical derivations, experimental procedures, and reporting of results. The datasets used are publicly available and appropriately cited, and the research did not involve direct human experimentation that would require IRB approval beyond standard dataset usage. However, given the dual-use potential of watermarking techniques, it is crucial to consider privacy concerns raised by possible misuse--such as unauthorized tracking of data. We encourage the development of ethical guidelines to ensure responsible use and deployment of this technology.

    \item[] Guidelines:
    \begin{itemize}
        \item The answer NA means that the authors have not reviewed the NeurIPS Code of Ethics.
        \item If the authors answer No, they should explain the special circumstances that require a deviation from the Code of Ethics.
        \item The authors should make sure to preserve anonymity (e.g., if there is a special consideration due to laws or regulations in their jurisdiction).
    \end{itemize}

\item {\bf Broader impacts}
    \item[] Question: Does the paper discuss both potential positive societal impacts and negative societal impacts of the work performed?
    \item[] Answer: \answerYes{} 
    \item[] Justification:  The paper discusses societal impacts. Section \ref{Sec:intro}  outlines the significant positive societal impacts, such as mitigating AI-driven disinformation and enhancing data authenticity. The paper also acknowledges challenges related to watermark robustness and the evolving nature of attacks in Section \ref{Sec:extension} and Appendix \ref{App: ext to robust}. These inherent challenges in ensuring perfect, unbreakable watermarks implicitly relate to potential negative outcomes if the technology is misused or circumvented.
    The discussion in Appendix \ref{App:impact} also addresses the dual-use nature of watermarking, acknowledging potential negative impacts such as privacy concerns from misuse (e.g., unauthorized data tracking). In light of these risks, the paper underscores the importance of developing ethical guidelines for responsible deployment and use of this technology.

    \item[] Guidelines:
    \begin{itemize}
        \item The answer NA means that there is no societal impact of the work performed.
        \item If the authors answer NA or No, they should explain why their work has no societal impact or why the paper does not address societal impact.
        \item Examples of negative societal impacts include potential malicious or unintended uses (e.g., disinformation, generating fake profiles, surveillance), fairness considerations (e.g., deployment of technologies that could make decisions that unfairly impact specific groups), privacy considerations, and security considerations.
        \item The conference expects that many papers will be foundational research and not tied to particular applications, let alone deployments. However, if there is a direct path to any negative applications, the authors should point it out. For example, it is legitimate to point out that an improvement in the quality of generative models could be used to generate deepfakes for disinformation. On the other hand, it is not needed to point out that a generic algorithm for optimizing neural networks could enable people to train models that generate Deepfakes faster.
        \item The authors should consider possible harms that could arise when the technology is being used as intended and functioning correctly, harms that could arise when the technology is being used as intended but gives incorrect results, and harms following from (intentional or unintentional) misuse of the technology.
        \item If there are negative societal impacts, the authors could also discuss possible mitigation strategies (e.g., gated release of models, providing defenses in addition to attacks, mechanisms for monitoring misuse, mechanisms to monitor how a system learns from feedback over time, improving the efficiency and accessibility of ML).
    \end{itemize}
    
\item {\bf Safeguards}
    \item[] Question: Does the paper describe safeguards that have been put in place for responsible release of data or models that have a high risk for misuse (e.g., pretrained language models, image generators, or scraped datasets)?
    \item[] Answer: \answerNA{} 
    \item[] Justification: The paper introduces a watermarking methodology and an algorithm (DAWA). It does not release new large language models or novel datasets scraped from the internet that would pose a high risk for misuse. The models used for experiments (Llama2, Mistral) are existing models developed by other parties.
    \item[] Guidelines:
    \begin{itemize}
        \item The answer NA means that the paper poses no such risks.
        \item Released models that have a high risk for misuse or dual-use should be released with necessary safeguards to allow for controlled use of the model, for example by requiring that users adhere to usage guidelines or restrictions to access the model or implementing safety filters. 
        \item Datasets that have been scraped from the Internet could pose safety risks. The authors should describe how they avoided releasing unsafe images.
        \item We recognize that providing effective safeguards is challenging, and many papers do not require this, but we encourage authors to take this into account and make a best faith effort.
    \end{itemize}

\item {\bf Licenses for existing assets}
    \item[] Question: Are the creators or original owners of assets (e.g., code, data, models), used in the paper, properly credited and are the license and terms of use explicitly mentioned and properly respected?
    \item[] Answer: \answerYes{} 
    \item[] Justification: The paper properly cites the original sources for datasets (C4 \citep{raffel2020exploring}, ELI5 \citep{fan2019eli5}, WMT19 dataset \citep{barrault2019findings}) and models (Llama2-13B \citep{touvron2023llama}, Mistral-8$\times$7B \citep{jiang2023mistral}, GPT-3 \citep{brown2020language}, MBART \citep{liu2020mbart}). The  licenses and terms of use for these assets are  explicitly mentioned (Appendix \ref{App: add exp}) and properly respected. 
    \item[] Guidelines:
    \begin{itemize}
        \item The answer NA means that the paper does not use existing assets.
        \item The authors should cite the original paper that produced the code package or dataset.
        \item The authors should state which version of the asset is used and, if possible, include a URL.
        \item The name of the license (e.g., CC-BY 4.0) should be included for each asset.
        \item For scraped data from a particular source (e.g., website), the copyright and terms of service of that source should be provided.
        \item If assets are released, the license, copyright information, and terms of use in the package should be provided. For popular datasets, \url{paperswithcode.com/datasets} has curated licenses for some datasets. Their licensing guide can help determine the license of a dataset.
        \item For existing datasets that are re-packaged, both the original license and the license of the derived asset (if it has changed) should be provided.
        \item If this information is not available online, the authors are encouraged to reach out to the asset's creators.
    \end{itemize}

\item {\bf New assets}
    \item[] Question: Are new assets introduced in the paper well documented and is the documentation provided alongside the assets?
    \item[] Answer: \answerNA{} 
    \item[] Justification: The paper introduces a new theoretical framework and algorithm (DAWA), which are documented within the paper itself (Appendix \ref{App: pseudo codes} for pseudo-code). No separate new datasets or software packages are being released that would require external documentation beyond the paper.
    \item[] Guidelines:
    \begin{itemize}
        \item The answer NA means that the paper does not release new assets.
        \item Researchers should communicate the details of the dataset/code/model as part of their submissions via structured templates. This includes details about training, license, limitations, etc. 
        \item The paper should discuss whether and how consent was obtained from people whose asset is used.
        \item At submission time, remember to anonymize your assets (if applicable). You can either create an anonymized URL or include an anonymized zip file.
    \end{itemize}

\item {\bf Crowdsourcing and research with human subjects}
    \item[] Question: For crowdsourcing experiments and research with human subjects, does the paper include the full text of instructions given to participants and screenshots, if applicable, as well as details about compensation (if any)? 
    \item[] Answer: \answerNA{} 
    \item[] Justification: The research does not involve crowdsourcing or direct human subject experiments (e.g., user studies with participant recruitment). Evaluation of text quality uses automated metrics (perplexity, BLEU) and existing datasets of human/AI text.
    \item[] Guidelines:
    \begin{itemize}
        \item The answer NA means that the paper does not involve crowdsourcing nor research with human subjects.
        \item Including this information in the supplemental material is fine, but if the main contribution of the paper involves human subjects, then as much detail as possible should be included in the main paper. 
        \item According to the NeurIPS Code of Ethics, workers involved in data collection, curation, or other labor should be paid at least the minimum wage in the country of the data collector. 
    \end{itemize}

\item {\bf Institutional review board (IRB) approvals or equivalent for research with human subjects}
    \item[] Question: Does the paper describe potential risks incurred by study participants, whether such risks were disclosed to the subjects, and whether Institutional Review Board (IRB) approvals (or an equivalent approval/review based on the requirements of your country or institution) were obtained?
    \item[] Answer: \answerNA{} 
    \item[] Justification: As no direct human subject research or crowdsourcing was conducted (see Q14), IRB approval was not applicable.
    \item[] Guidelines:
    \begin{itemize}
        \item The answer NA means that the paper does not involve crowdsourcing nor research with human subjects.
        \item Depending on the country in which research is conducted, IRB approval (or equivalent) may be required for any human subjects research. If you obtained IRB approval, you should clearly state this in the paper. 
        \item We recognize that the procedures for this may vary significantly between institutions and locations, and we expect authors to adhere to the NeurIPS Code of Ethics and the guidelines for their institution. 
        \item For initial submissions, do not include any information that would break anonymity (if applicable), such as the institution conducting the review.
    \end{itemize}

\item {\bf Declaration of LLM usage}
    \item[] Question: Does the paper describe the usage of LLMs if it is an important, original, or non-standard component of the core methods in this research? Note that if the LLM is used only for writing, editing, or formatting purposes and does not impact the core methodology, scientific rigorousness, or originality of the research, declaration is not required.
    \item[] Answer: \answerYes{} 
    \item[] Justification: The proposed DAWA algorithm, specifically its detection phase, utilizes a Surrogate Language Model (SLM) as a component to approximate the watermarked distributions and enable model-agnostic detection (Section \ref{Sec: algorithm} ``Novel Tricks''); Algorithm \ref{algorithm1: detection}). This use of an LLM (the SLM) is integral to the methodology of the DAWA detector.
    \item[] Guidelines:
    \begin{itemize}
        \item The answer NA means that the core method development in this research does not involve LLMs as any important, original, or non-standard components.
        \item Please refer to our LLM policy (\url{https://neurips.cc/Conferences/2025/LLM}) for what should or should not be described.
    \end{itemize}

\end{enumerate}

\end{document}